\newcommand{\argmax}{\operatornamewithlimits{arg\,max}}
\newcommand{\argmin}{\operatornamewithlimits{arg\,min}}
\newtheorem{theorem}{Theorem}[section]
\newtheorem{lemma}[theorem]{Lemma}
\newtheorem{remk}[theorem]{Remark}
\newtheorem{propos}[theorem]{Proposition}
\newtheorem{coroll}[theorem]{Corollary}
\begin{document}
%
\title{On Optimal Latency of Communications}
%
%
%

\author{Minh~Au,~\IEEEmembership{Member,~IEEE}
		Fran\c cois~Gagnon,~\IEEEmembership{Senior~Member,~IEEE},
\thanks{M. Au is a Postdoctoral fellow at Department of Electrical Engineering 
at \'Ecole de technologie sup\'erieure, Montr\'eal,
QC, H3C1K3 e-mail: Minh.Au@lacime.etsmtl.ca.}
\thanks{F. Gagnon is a Professor at Department of Electrical Engineering 
at \'Ecole de technologie sup\'erieure, Montr\'eal,
QC, H3C1K3 e-mail: Francois.Gagnon@etsmtl.ca.}
}

%
%

\markboth{}%
{Au \MakeLowercase{\textit{et al.}}:  }
%



\maketitle

\begin{abstract}
In this paper we investigate the optimal latency of communications. Focusing on fixed rate communication without any feedback channel, this paper encompasses low-latency strategies with which one hop and multi-hop communication issues are treated from an information theoretic perspective. By defining the latency as the time required to make decisions, we prove that if short messages can be transmitted in parallel Gaussian channels, for example, via orthogonal frequency-division multiplexing (OFDM)-like signals, there exists an optimal low-latency strategy for every code. This can be achieved via early-detection schemes or asynchronous detections. We first provide the optimal achievable latency in additive white Gaussian noise (AWGN) channels for every channel code given a probability block error $\epsilon$. This can be obtained via sequential ratio tests or a ``genie'' aided, \textit{e.g}. error-detecting codes. Results demonstrate the effectiveness of the approach. Next, we show how early-detection can be effective with OFDM signals while maintaining its spectral efficiency via random coding or pre-coding random matrices. Finally, we explore the optimal low-latency strategy in multi-hop relaying schemes. For amplify-and-forward (AF) and decode-and-forward (DF) relaying schemes there exist an optimal achievable latency. In particular, we first show that there exist a better low-latency strategy, for which AF relays could transmit while receiving. This can be achieved by using amplify and forward combined with early detection.
\end{abstract}

\begin{IEEEkeywords}
Early-detection, finite-blocklength regime, low-latency communications, low-latency in multi-hop systems.
\end{IEEEkeywords}

%
\IEEEpeerreviewmaketitle

\section{Introduction}
\IEEEPARstart{A}{s} more devices are developed and interconnected, the communication infrastructure becomes critical and time-sensitive for many applications such as drone control, remote surgery, and vehicle to vehicle communications. These critical communications must be resilient to interference, jamming, and intrusions and are often required to provide services with extremely low latency. Obviously, a speed of light induced latency is necessarily present. For communicating over a one thousand kilometres, this unavoidable latency is around $3$ ms. Interestingly, wireless communication is faster than optical fiber guided communication due to a larger index of refraction in the optical fiber. Over long distance ranges, wireless relaying schemes could be an efficient solution for low-latency communication. Several works has been dedicated to this purpose \cite{Bradford2012,Elson2007,Yi2007,Neely2005}. To our knowledge, these research works might not be optimal for low-latency communication. Hence, this work aims to analyse the optimal low-latency communication strategies in such contexts.

Shannon's fundamental limit on communication bounds the capacity of a communication medium to transmit information \cite{Shannon1948}. Unfortunately, it does not bound the time required for a receiver to make a decision on a received symbol. Recent results on the acheivability of the channel capacity in the finite-blocklength regime also impose a minimal duration for the transmission \cite{Polyanskiy2010} under synchronous detection, \textit{i.e}. decisions at a fixed sampling period. Indeed, given a probability of block error $\epsilon$, the minimal blocklength needed to achieve a fraction of the capacity $\eta$ is given by:
\begin{equation}
n \approx \left(\frac{Q^{-1}(\epsilon)}{1-\eta} \right)^{2}\frac{V}{C^{2}}
\label{eq:0.0.1} 
\end{equation}
where $Q^{-1}(\cdot)$ is the inverse of the Q-function and $V$ is the dispersion of the channel. For example, for an AWGN channel with a signal to noise ratio (SNR) $= 20$ dB, $\epsilon = 10^{-6}$, and $\eta = 0.9$, the minimal blocklength is $n \approx 190$ symbols. Since the channel coding rate is $R = k/n = \eta C$, the maximal information block size is $k \approx 316$ bits. However, it is not clear that such a channel code with $k = 316$ bits and blocklength $n = 190$ symbols is known. On the other hand, without prior knowledge of the channel behaviour, feedback communications and rateless coding schemes have been proposed to adapt opportunistically to channel variations which provide reliable communications \cite{Lomnitz2013,Shayevitz2009,Lomnitz2011,Draper2009,Draper2004}. Notably, \cite{Draper2004} have proposed a strategy for a reliable communication over an unknown channel by testing periodically the received sequence. As soon as a message can be decoded, an acknowledgement is sent which stop the transmission. Unfortunately, neither these strategies provide the minimal duration for the transmission nor do they prove that they are optimal in terms of latency.
 
Furthermore, extensive research works in \cite{Neely2005,Musavian2010,Elson2007,Yi2007,Kanodia2001} show that the issue of latency is central to the multi-hop relaying problem. The general relaying problem has been studied and it has many variations as presented in \cite{Cover1979,Neely2005}. Ideally, when the messages are long, one can reduce latency by using decode-and-forward (DF) schemes by dividing the message into smaller parts. However, the receiver must wait an entire codeword before determining any of the input symbols which yields greater computational complexity and latency \cite{Bradford2012}. On the other hand, if one does not have a receive power or error probability issue, an amplify-and-forward (AF) scheme has relatively low complexity and minimal latency \cite{Bradford2012}. 

A first question that can be explored concerns the relative merit of each scheme under realistic power, bandwidth, and error probability requirements. One could thus obtain the channel condition regions which is optimal for each case. This however avoids the main issue: if neither is always optimal, what is the overall optimal low-latency scheme under realistic channel constraints? To answer this question, we will first explore both extremes and obtain their specific latency. 

In this paper, we aim to explore low-latency communication through fixed rate channel codes wth no feedback and extended to multi-hop systems. Furthermore, we assume that all symbols of a message are transmitted simultaneously in parallel over the channel. As fundamental results, we extend results obtained by \cite{Polyanskiy2010} on the achievability bound, wherein the optimal latency message size is determined for a given usable bandwidth and received power. In addition, in an optimistic channel condition, we first show that the latency problem is approximately linked to the bandwidth and power, in which case if all symbols are transmitted through a parallel channel via orthogonal frequency division multiplexing (OFDM)-like signals, then the latency can be reduced by using early-detection schemes. 

The second fundamental result shows that it is possible to reformulate AF relaying schemes to attain a shorter latency than AF schemes and/or DF schemes only. This can be obtained through early detection strategies or asynchronous detections. For such applications, sequential detection can minimize the time needed to make a decision on a received symbol. Thus, the transmitter can send symbols of longer duration to ensure their receivablility, while the receiver can make a decision on that symbol as soon as the probability of its correct detection is high enough. 

Consider a good channel code whose $\epsilon$ is the solution to equation (\ref{eq:0.0.1}) for a coding rate of $R = 0.5$ bit per channel use and a blocklength of $n = 150$ symbols. All symbols are simultaneously transmitted on parallel independent AWGN channels, and has a fixed symbol duration denoted by $T$. The performance of such a code is depicted in Fig. \ref{fig:0.1.1}.  
\begin{figure}[!htbp]
\centering
\includegraphics[width=3.0in,height=2.8in]{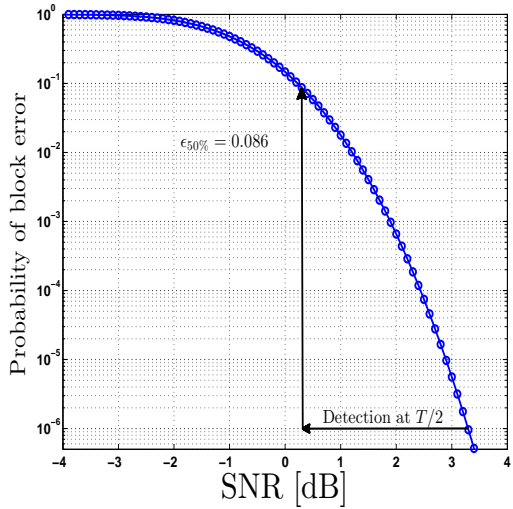}
\caption{Early detection strategy for messages that use a good channel code: Channel coding rate $R = 0.5$ bit per channel use and blocklength $n = 150$ symbols}
\label{fig:0.1.1}
\end{figure}
To reduce latency, let us consider that the receiver performs an early detection at $\tau = \left\lbrace T/2,T\right\rbrace$ in which we could stop as soon as the probability of its correct detection is high enough. The average latency is determined as follows: with respect to a maximal probability of block error $\epsilon(T) = 10^{-6}$, we wish to early detect the message. At $T/2$, it follows that messages can be received successfully with an error probability of $\epsilon(T/2) = 0.086$. The average latency is determined by the expectation of having correct decisions at both $T/2$ and $T$. To do so, assuming the decoder chooses the correct message, the probabilities of having a correct decision at $T/2$ and $T$ are respectively:
\begin{subequations}
\begin{align}
p(T/2) = 1-\epsilon(T/2)\\
p(T) =  (1-\epsilon(T))-p(T/2)
\end{align}
\label{eq:0.0.2}
\end{subequations}
Using equation (\ref{eq:0.0.2}), we have $p(T/2)= 0.914$ and $p(T) = 0.086-10^{-6}$. Therefore, by the expectation value of $\tau$, the message could be received successfully by using only $\approx 54.3\%$ of the duration of symbols on average. In other words, early-detection schemes reduce latency especially when the propagation delay is small compared to the duration of the data transmission. 

The remainder of this paper is organized as follows. In Section \ref{sect:I}, we formulate the minimal-latency problem to investigate the optimal achievable latency from an information theoretical perspective. This allows for the identification and the characterization of channel codes for low-latency communication. It can be shown that for a high signal-to-noise ratio, reducing latency is linked to the bandwidth and power. In such a case, the communication latency should not be defined as the duration of data transmission, but by the difference between the beginning of the transmission of a message and the instant of the correct decision. 

In Section \ref{sect:II}, we define early-detection schemes that can be employed for such a purpose. We first derive the minimal acheivable latency when the receiver performs early-detection schemes. We prove that, if all symbols of the message are transmitted simultaneously in parallel through AWGN channels, then early-detection schemes can significantly reduce latency for short message length. We design simple efficient low-latency communication schemes. In particular, sequential detection tests can be used for reducing the time required to make a decision quickly. In literature, sequential probability ratio tests have been proposed to minimize the time required to make a decision for which the probabilities of error do not exceed predefined values \cite{Wald1948,Chernoff1959,Lehmann1959,Baum1994,Dragalin1999,Dragalin2000,Poor2009}. Unfortunately, messages consist of a large number of symbols, which renders these tests computationally prohibitive due to a large number of possible messages. On the other hand, it is possible to use a list of the $\ell$ most probable codewords or messages in order to reduce the number of hypothesis on the message that has been sent. 

In Section \ref{sect:III}, two examples employing these proposed fastest-detection schemes are discussed under various channel conditions and message lengths. It can be shown that these early-detection schemes minimize the time required to make decisions on messages compared to decisions at the end of the symbol duration. This allows for reducing latency. We then discuss the use of early-detection schemes in OFDM. We prove that one can minimize the time required to make decisions by using random coding schemes, or when codewords are pre-coded by random rotation or orthogonal matrices. 

In Section \ref{sect:IV}, we explore the minimal latency for multi-hop communications. We investigate the optimal scheme which reduces latency in relaying schemes under various channel conditions and message lengths. For DF relaying schemes, it is possible to reduce latency by using early-detection schemes. However, dividing short messages into smaller parts does not reduce latency because the receiver has to wait for the last part of the message to make its decision. Furthermore, the absence of short codes makes the division less attractive. Focusing on short messages and considering early-detection in AF relaying scheme, it is possible to reduce latency by detecting before the end of the symbol duration. Section \ref{sect:V} summarizes our main findings. 

\section{On the optimal achievable latency for synchronous detection schemes}
\label{sect:I}
\subsection{Minimal Latency Problem}
The minimal latency problem can be formulated as the extension of the classical information theory channel capacity equation. Lets consider $M$ possible messages encoded by a good channel code and a codeword is denoted by $\mathbf{X} \in \mathbb{R}^{n}$. The received vector is then given by:
\begin{equation}
\mathbf{Y} = \mathbf{X}+\mathbf{N},
\end{equation}
where $\mathbf{N} \sim \mathcal{N}(0,\mathbf{I}_{n})$ is multidimensional Gaussian noise, independent and identically distributed (i.i.d.) individual components with zero mean and unit variance. Then, let $s(t)$ be the signal being sent such that:
\begin{equation}
\begin{split}
s(t) &= \sum\limits_{i = 1}^{n} X_{i} \varphi_{i}(t),
\end{split}
\label{eq:1.1.00}
\end{equation}
where $\varphi_{i}(t)$ is an orthonormal basis spanning the vector-space of signals used for the transmission of the input sequence $\mathbf{X}$ with a duration $T$. The received signal $y(t)$ is simply given by:
\begin{equation}
y(t) = s(t)+n(t),
\end{equation}
and $\mathbf{Y}$ is obtained by projection onto the orthogonal basis. Furthermore, for bandpass signals, the minimal $3$ dB bandwidth required to send signals is $W = n/2T$. Then, assuming $\mathbf{X}$ satisfies an input constraint $\rho$ such that:
\begin{equation}
\Vert \mathbf{X}\Vert^{2} = n\rho,
\label{eq:1.1.1}  
\end{equation}
where $\rho  = PT$ and $P$ is the received power. Under such a condition, Shannon's capacity in bit per channel use for an AWGN channel is given by:
\begin{equation}
C = \frac{1}{2} \log_{2}\left( 1+\rho\right)
\label{eq:Io.B.1bis} 
\end{equation}
The capacity theorem asserts that $\log_{2}(M)$ cannot be greater than $C$ for reliable transmission. Since the incurred latency is $T$, the theory is simple: the latency is reduced by increasing the bandwidth or the power. Specifically, for a given power and bandwidth, the minimal latency is simply the number of bits to transmit divided by the capacity:
\begin{equation}
L_{\min} = \frac{\log_{2}(M)}{C}T
\end{equation}
Unfortunately, the coding theorem is stated over a very long message. Nevertheless, the latest results in coding theorem in the finite-blocklength regime, states that there exists a coding scheme for which the maximal code size achievable with a given error rate $\epsilon$ and a blocklength $n$ can be bounded for various channel conditions \cite{Polyanskiy2010}. 

\subsection{Minimal Achievable Latency in the Finite-Blocklength Regime}
Consider a code with a blocklength $n$ and $M$ codewords. We denote $\mathbf{X}$ as the input sequence encoded by an ($n$,$M$) code. $\mathbf{Y}$ is the corresponding output sequence induced by $\mathbf{X}$ via a channel $P_{\mathbf{Y}\mid \mathbf{X}}:\mathcal{A}^{n} \rightarrow \mathcal{B}^{n}$, which is a sequence of conditional probabilities where $\mathcal{A}$ and $\mathcal{B}$ are the input and output alphabets respectively \cite{Verdu1994}. An ($n$,$M$) code has an encoding function $f : \left\lbrace 1,\cdots, M \right\rbrace \rightarrow \mathcal{A}^{n}$ and a decoding function $g : \mathcal{B}^{n} \rightarrow  \left\lbrace 1,\cdots, M \right\rbrace$. By definition, if the messages are equiprobable a priori, the average probability of error is given by:
\begin{equation}
P_{e} = \frac{1}{M}\sum\limits_{m = 1}^{M}\text{P}_{\mathbf{Y}\mid m}(g(\mathbf{Y})\neq m)
\end{equation}
where the measure $\text{P}_{\mathbf{Y}\mid m}$ denotes the conditional probability that the decoder $g(\mathbf{Y})$ has made a correct or an incorrect decision on the message, when the actual message $m$ was transmitted. An ($n$,$M$) code whose average probability of error is not larger than $\epsilon$ is an ($n$,$M$,$\epsilon$) code. For a real-valued discrete-time AWGN channel, with
\begin{itemize}
\item $\mathcal{A} = \mathbb{R}$
\item $\mathcal{B} = \mathbb{R}$
\item $P_{\mathbf{Y}\mid \mathbf{X} = \mathbf{x}}  = \mathcal{N}(\mathbf{x},\mathbf{I}_{n})$ 
\end{itemize}
The maximal code size achievable with a given probability of error and blocklength $M^{*}(n,\epsilon)$ is given by the following theorem \cite{Polyanskiy2010}.
\begin{theorem}[Polyanskiy \textit{et al.}]
For the AWGN channel with SNR $\rho$, $0<\epsilon < 1$, and for equal-power, maximal-power and average-power constraints, the maximal code size achievable $\log_{2} M^{*}(n,\epsilon)$ is given by:
\begin{equation}
\log_{2} M^{*}(n,\epsilon) = nC-\sqrt{n V} Q^{-1}(\epsilon) + O(\log_{2} n)
\label{eq:II.B.1}
\end{equation}
where $V$ is the channel dispersion such that:
\begin{equation}
V = \frac{\rho}{2} \frac{\rho+2}{(\rho+1)^{2}} \log_{2}^{2}e
\label{eq:II.B.1bis}
\end{equation}
\end{theorem}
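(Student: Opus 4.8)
The plan is to establish the result in two complementary halves---an \emph{achievability} (lower) bound and a \emph{converse} (upper) bound on $\log_{2} M^{*}(n,\epsilon)$---and then observe that the two bounds coincide up to the claimed $O(\log_{2} n)$ remainder. Both halves hinge on the information density
\[
i(\mathbf{x};\mathbf{y}) = \log_{2} \frac{dP_{\mathbf{Y}\mid \mathbf{X}=\mathbf{x}}}{dP_{\mathbf{Y}}^{*}}(\mathbf{y}),
\]
where $P_{\mathbf{Y}}^{*} = \mathcal{N}(0,(1+\rho)\mathbf{I}_{n})$ is the capacity-achieving output distribution. For codewords placed on the power sphere $\Vert \mathbf{x}\Vert^{2} = n\rho$, this quantity splits into a sum of $n$ nearly i.i.d.\ per-letter contributions whose mean equals the capacity $C$ of (\ref{eq:Io.B.1bis}) and whose variance equals the dispersion $V$ of (\ref{eq:II.B.1bis}). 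The entire argument is a careful Gaussian approximation of the tail of this sum.

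For achievability I would use a random-coding argument: draw the $M$ codewords independently and uniformly from the sphere $\Vert \mathbf{x}\Vert^{2} = n\rho$, and decode with a threshold (or maximum-likelihood) rule on $i(\mathbf{x};\mathbf{y})$. Bounding the average error probability with the $\kappa\beta$ bound---or equivalently a random-coding union bound adapted to the input-power constraint---the dominant term is the probability that $i(\mathbf{X};\mathbf{Y})$ falls below $\log_{2} M$. Applying the Berry--Esseen theorem to the normalized sum shows this probability equals $\epsilon$ precisely when $\log_{2} M = nC - \sqrt{nV}\,Q^{-1}(\epsilon) + O(\log_{2} n)$, which yields the lower bound. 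For the converse I would invoke the meta-converse: any $(n,M,\epsilon)$ code satisfies $M \le 1/\beta_{1-\epsilon}(P_{\mathbf{Y}\mid \mathbf{X}=\mathbf{x}},\,P_{\mathbf{Y}}^{*})$, where $\beta_{\alpha}$ is the minimal type-II error of the optimal Neyman--Pearson test between the two output laws. Evaluating $\beta_{1-\epsilon}$ through the same normal approximation of $i(\mathbf{x};\mathbf{Y})$ produces the matching upper bound $\log_{2} M \le nC - \sqrt{nV}\,Q^{-1}(\epsilon) + O(\log_{2} n)$. One must additionally verify that the equal-, maximal-, and average-power constraints all yield the same second-order term: for the first two the extremal input lives on the sphere, and the average-power case reduces to these by a separate concentration argument on $\Vert \mathbf{X}\Vert^{2}$.

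The main obstacle I expect is pinning down the second-order constant \emph{exactly} rather than merely to leading order. Two difficulties drive this. First, the coordinates of a point drawn uniformly on the sphere are not independent, so the clean i.i.d.\ central limit theorem does not apply directly; one must either change measure to i.i.d.\ Gaussian inputs while controlling the Radon--Nikodym correction, or prove a Berry--Esseen estimate directly for the spherical ensemble. Second, the term $\sqrt{nV}\,Q^{-1}(\epsilon)$ is a $\Theta(\sqrt{n})$ refinement sitting beneath the $\Theta(n)$ capacity term, so every approximation must be accurate to $o(\sqrt{n})$ and all residual errors must be shown to be genuinely $O(\log_{2} n)$. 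This is exactly the role of the Berry--Esseen third-absolute-moment estimate of $i(\mathbf{x};\mathbf{y})$, applied identically in the achievability and converse steps, and it is this shared estimate that forces the two bounds to agree and closes the proof.
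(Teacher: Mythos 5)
This theorem is imported by the paper from \cite{Polyanskiy2010} and stated without proof, so there is no in-paper argument to compare against; your outline is, however, a faithful reconstruction of the proof in that reference: achievability via random coding on the power sphere evaluated through the $\kappa\beta$ (or RCU) bound, a converse via the meta-converse with the capacity-achieving output law $\mathcal{N}(0,(1+\rho)\mathbf{I}_{n})$ as the auxiliary measure, and a Berry--Esseen normal approximation of the information density producing the $\sqrt{nV}\,Q^{-1}(\epsilon)$ term identically on both sides. Your flagged difficulties (the dependence among spherical coordinates, the need to control every residual to $o(\sqrt{n})$, and the reduction of the average-power constraint to the maximal-power case) are exactly where the work lies in the original proof, so the proposal is correct and takes the intended route.
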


Specifically, there exists a coding scheme such that the achievability bound for equal-power and maximal power constraints such that:
\begin{equation}
\begin{split}
\log_{2} M_{e,m}^{*}(n,\epsilon) \leq nC-\sqrt{nV} Q^{-1}(\epsilon) + \frac{1}{2}\log_{2} n + O(1)
\end{split}
\label{eq:II.B.2bis}
\end{equation}
whereas for average-power constraint, we have:
\begin{equation}
\begin{split}
\log_{2} M_{a}^{*}(n,\epsilon)\leq nC-\sqrt{nV} Q^{-1}(\epsilon) + \frac{3}{2}\log_{2} n + O(1)
\end{split}
\label{eq:II.B.2bis2}
\end{equation}
\begin{remk}
Under a normal approximation, an explicit result in terms of physical variables that are linked to the channel can be obtained by introducing the latency $L$, the power $P$ and the symbol duration $T$ in equations (\ref{eq:II.B.2bis}) and (\ref{eq:II.B.2bis2}). Moreover, by assuming $T = 1/2W$ and $L = nT$, the achievability bound is rewritten as:
\begin{equation}
\begin{split}
\log_{2} M^{*}(n,\epsilon)\leq &\frac{L}{2T}\log_{2}(1+PT)-\sqrt{\frac{L}{T} \frac{PT(PT+2)}{2(PT+1)^{2}}}\\
&\cdot\log_{2}(e)Q^{-1}(\epsilon)+ \frac{1}{2}\log_{2}\left(  \frac{L}{T}\right)+O(1) 
\end{split}
\label{eq:II.B.2biss}
\end{equation}   
\end{remk}

\begin{remk}
By approximating the logarithmic functions, two observations are possible: when $PT$ is small, equation (\ref{eq:II.B.2biss}) may be rewritten for a power-limited region:
\begin{equation}
\begin{split}
\log_{2} M^{*}(n,\epsilon)\sim &\frac{LP}{2}-\sqrt{LP}\log_{2}(e)Q^{-1}(\epsilon)\\
& + \frac{1}{2}\log_{2}\left(  \frac{L}{T}\right)+O(1) 
\end{split}
  \end{equation}  
Under such a condition, the first two terms are independent of $T$ and thus, increasing the bandwidth indefinitely does not work well. In particular, it occurs when $PT\ll 1$. On the other hand, when $PT\gtrsim 2$, equation (\ref{eq:II.B.2biss}) may be rewritten for a bandwidth-limited case:
\begin{equation}
\begin{split}
\log_{2} M^{*}(n,\epsilon)\sim &\frac{L}{2T}\log_{2}(1+PT)-\sqrt{\frac{L}{T}}\log_{2}(e)Q^{-1}(\epsilon)\\
&+ \frac{1}{2}\log_{2}\left(  \frac{L}{T}\right)+O(1) 
\end{split}
  \end{equation} 
As for the classical capacity equation, reducing the latency is approximately linked to the bandwidth and power in this case.  
\end{remk}

For a fixed duration transmission, the minimal latency is simply obtained by using equation (\ref{eq:II.B.2biss}). For a single hop wireless link, and assuming a known constant channel gain where $P$ is the received power, the minimal latency is depicted in Fig. \ref{fig:0.1.2.1} for an error rate of $\epsilon = 10^{-7}$. For example, let us consider that we have $PT = 2.5$ or $4$ dB, and an information block size message of $k = 103$ bits needs to be sent. If all the bandwidth is used, then a blocklength of $n = 186$ symbols or channel use are necessary and for a bandwidth of $W = 50$ MHz, the latency would be $L = 1.86~\mu$s. However, to our knowledge, we don't know of such a good ($186$,$2^{103}$,$\epsilon$) code.  
\begin{figure*}[!htbp]
\centering
\includegraphics[width=5.0in,height=3.8in]{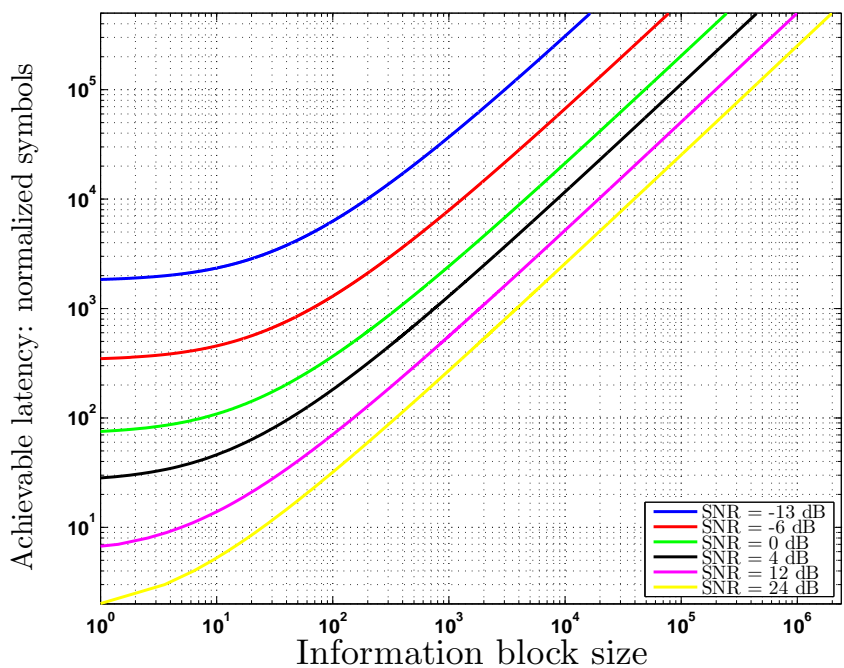}
\caption{Achievable latency as a function of the information block size message for different channel conditions, probability of block error  $\epsilon = 10^{-7}$}
\label{fig:0.1.2.1}
\end{figure*}
\begin{remk}
Another noteworthy feature of this bound is the quasi-linearity for most power-bandwidth ratios of interest for about $k = 1000$ bits. By deriving the bound in $n = L/T$, we obtain:
\begin{equation}
\begin{split}
\frac{d\log_{2} M^{*}(n,\epsilon) }{d L/T} \sim& \frac{1}{2}\log_{2}(1+PT)+\frac{1}{2\sqrt{L/T}}\\
&\cdot\sqrt{\frac{PT(PT+2)}{2(PT+1)^{2}}}\log_{2}(e)Q^{-1}(\epsilon)+ \frac{1}{2L/T} 
\end{split}
\end{equation}
Thus, as the code length grows, there is not much advantage in terms of extra information block size that are encoded. In fact, at about ten thousand symbols, at reasonable channel conditions and error probabilities, the effective code rate can improve by $1\%$. Focusing on code lengths of about $100$ to $1000$ should be paramount. The linearity is also important because the latency in multi-hop communication can be reduced by breaking up messages into smaller parts. 
\end{remk}

Most of the theory has until now been concerned with the choice of transmitter characteristics assuming the maximum likelihood receiver. Unfortunately, analysing the latency is not sufficient. Communication latency should not be defined as the duration of data transmission, but by the difference between the beginning of the transmission of a message and the instant of correct decision. Ignoring the propagation delay, the latency can be smaller than the symbol duration.

\section{Early-detection using sequential tests}
\label{sect:II}
\subsection{Problem Formulation}
To reduce latency, we should consider such a scenario, in which a message is sent over a long period of time. The receiver would use the minimal proportion of time to compute the required message, hence we discount the detection or decoding computation. We formulate the early-detection problem as follows. Assuming a transmitter sends a message through a channel with a fixed symbol duration $T$ and let $\mathbf{Y} = \left\lbrace \mathbf{Y}_{1},\mathbf{Y}_{2}, \cdots, \mathbf{Y}_{T}\right\rbrace $ be a collection of random vectors, i.i.d. with a density $\text{P}_{t}$, whose $\mathbf{Y}_{t}$ is the received message signal sampled at $t\leq T$. In particular, let consider: 
\begin{equation}
\begin{split}
\mathbf{Y}_{t} &= \sum\limits_{i = 1}^{t} \mathbf{Y}_{i}\\
\end{split}
\label{eq:3.bis.1}
\end{equation}
where $\mathbf{Y}_{i}$ is given by:
\begin{equation}
\begin{split}
\mathbf{Y}_{i} &= \mathbf{X}_{i}+\mathbf{N}\\
\end{split}
\label{eq:3.bis.2}
\end{equation}
where $\mathbf{N} \sim \mathcal{N}(0,\mathbf{I}_{n})$ is the additive white Gaussian noise vector. $\mathbf{X}_{i} \in \mathbb{R}^{n}$ is the message signal of a blocklength $n$ indexed by $i$ which satisfies the equal power constraint $\rho_{i}  = P\tau_{i}$, where $\tau_{i}$ is a small proportion of the symbol duration:
\begin{equation}
\Vert \mathbf{X}_{i}\Vert^{2} = n\rho_{i}
\label{eq:3.bis.3} 
\end{equation}
and specifically, when $\rho = PT $, we have: 
\begin{equation}
\sum\limits_{i = 1}^{T}\Big\Vert \mathbf{X}_{i}\Big\Vert^{2} = n\rho  
\end{equation}
In this paper we assume that the received signal has been projected onto the orthogonal basis to obtain $\mathbf{Y}_{i}$. This means that the bases spanning the vector-space of signals are orthonormal for all the proportions of the symbol duration $\tau_{i}$. Hence, all components in $\boldsymbol\varphi (t)$ of a duration $\tau_{i}$ are orthonormal.     

To perform low-latency communication, a sequential test signal detection can be used. Assume that there is $M$ possible messages that are equiprobable a priori. Thus, the optimal decision about which message was transmitted with respect to a finite number of samples is formulated as a stopping time \cite{Chernoff1959,Lehmann1959,Siegmund1985}
\begin{equation}
\tau_{m} = \inf\left\lbrace t: \hat{m} = \argmax\limits_{1\leq m\leq M} \text{P}_{t}(\mathbf{Y}_{t}|m)> S_{m}  \right\rbrace 
\label{eq:I.1.1}
\end{equation}
In other words, we stop as soon as the maximum likelihood exceeds a threshold $S_{m}$, decide that $m$ was transmitted. Since the sequence $\mathbf{Y}_{1},\mathbf{Y}_{2}, \cdots, \mathbf{Y}_{T}$ are i.i.d., $\text{P}_{t}(\mathbf{Y}_{t}|m) $ can be written as:
\begin{equation}
\text{P}_{t}(\mathbf{Y}_{t}|m) = \prod\limits_{i=1}^{t} \text{P}(\mathbf{Y}_{t}|m) 
\end{equation}
Under such a condition, the latency can be reduced by deciding as soon as the message $m$ was transmitted. Indeed, if one can decide whether the message has been received correctly at $\tau_{m} \leq T$, then the latency is $n\tau_{m} \leq nT$, and even more when $\tau_{m}$ is smaller. Such a scheme is an early detection in which the receiver is required to wait until the probability of having a message in the codebook is high enough to make a decision.

\subsection{Minimal Achievable Latency using an Optimal Early-detection Scheme}
An early-detection scheme is optimal when it minimizes the expectation of the proportion of the symbol duration used for detection, for which the error rate does not exceed a predefined value. Moreover, we assume that a perfect error-detecting code (no false positive) checks whether or not, there are errors in the transmitted message. The decision rule is straightforward: as soon as an error has not been detected, decode the message. Otherwise, wait for the next sample until no error occurs. We should note that if an error is still detected until the end of the transmitted symbol $T$, the decision will be made at $T$ by default. 

Under such a condition, the average latency can be simply determined by the error rate as a function of the SNR $\rho$. As an example, we use an arbitrary ($n$,$M$,$\epsilon$,$\rho$) code whose average probability of decoding error over the channel is not larger than $\epsilon$. By using equation (\ref{eq:II.B.1}) and under the normal approximation, we obtain the maximum achievable channel coding rate $R^{*}(n,\epsilon,\rho)$ (in bits per channel use) such that:
\begin{equation}
R^{*}(n,\epsilon,\rho) \approx C-\sqrt{\frac{V}{n}}Q^{-1}(\epsilon)+\frac{1}{2n} \log_{2} n
\label{eq:II.B.2}
\end{equation}
It follows that for a given fixed channel coding rate $R$, an SNR $\rho$, and a blocklength $n$, the performance in terms of error rate of such codes can be determined using equation (\ref{eq:II.B.2}) by:
\begin{equation}
\epsilon^{*}(\rho,R,n) \approx Q\left( \frac{C-R+ \frac{1}{2n}\log_{2} n}{\sqrt{V/n}}\right) 
\label{eq:II.B.3}
\end{equation}
The performance of such ($n$,$M$,$\epsilon$) codes in the finite-blocklength regime is depicted in Fig. \ref{fig:0.1.2}, for $R = 0.5$ and $R = 0.95$ in bit per channel use as well as for various blocklength $n$.  
\begin{figure}[!htbp]
\centering
\includegraphics[width=3.7in,height=3.0in]{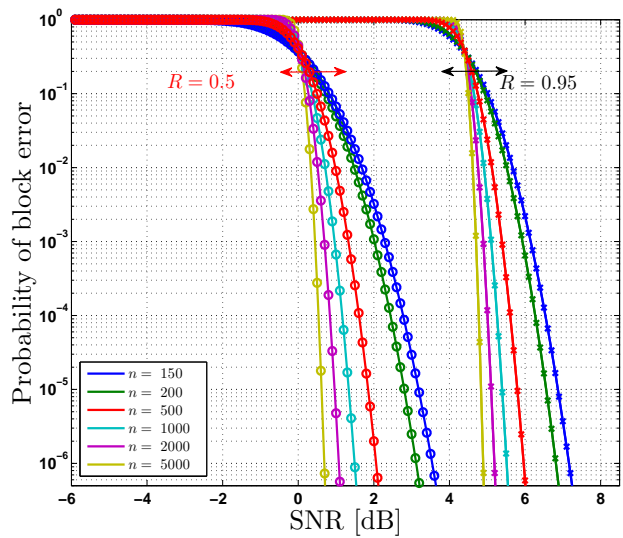}
\caption{Performance of ($n$,$M$,$\epsilon$) codes in the finite-blocklength regime}
\label{fig:0.1.2}
\end{figure}
\begin{remk}
For very long messages and $R > C$, reliable communication is not possible which renders the sequential test inefficient in the sense that the error rate is close to one, and in fact, such a test would never satisfy equation (\ref{eq:I.1.1}), even though it is optimal. In contrast to the latter, for $R < C$ as long as the error rate is smaller than one, the minimal proportion of the symbol duration required to obtain a message, can be provided by Shannon's capacity formula such that: 
\begin{equation}
 \frac{2^{2R}-1}{P}< \inf(\tau).
\end{equation}
\end{remk}
From this remark, we conclude that latency has a strong dependence on the behaviour of the error probability under a given range of SNR.

\begin{theorem}
The optimal average latency of early detection schemes for an arbitrary ($n$,$M$,$\epsilon$,$\bar{\tau}$) is given by:
\begin{equation}
\bar{\tau}  \leq  \int_{0}^{T} -\tau dQ(\gamma(\tau)) d\tau
\label{eq:3.3.5}
\end{equation}
where $\gamma(\tau) =  \frac{C(P\tau)-R+ 1/2n \log_{2} n}{\sqrt{V(P\tau)/n}}$ and $dQ(\gamma(\tau))$ is the differential of the $Q$-function given by equation (\ref{eq:II.B.3}) for all $\gamma(\tau)$.
\label{th:1}
\end{theorem}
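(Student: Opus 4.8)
The plan is to treat the early-detection stopping time $\tau_{m}$ of (\ref{eq:I.1.1}) as a random variable supported on $[0,T]$ and to identify its law with the finite-blocklength decoding performance evaluated at the signal accumulated up to each instant. First I would note that, by the equal-power splitting (\ref{eq:3.bis.3}) and the additivity (\ref{eq:3.bis.1}), the statistic available to the detector at a proportion $\tau$ of the symbol carries an effective SNR equal to $P\tau$. Substituting $\rho=P\tau$ into the normal approximation (\ref{eq:II.B.3}) then yields the instantaneous error rate $\epsilon(\tau)=Q(\gamma(\tau))$ with $\gamma(\tau)=\frac{C(P\tau)-R+\frac{1}{2n}\log_{2}n}{\sqrt{V(P\tau)/n}}$, which is precisely the quantity in the statement; this is the continuous-time analogue of the two discrete checkpoints $\{T/2,T\}$ used in (\ref{eq:0.0.2}).

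Next I would exploit the genie (perfect, false-positive-free) error detector to pin down the distribution of $\tau_{m}$. Since the accumulated SNR $P\tau$ is non-decreasing in $\tau$, the error rate $\epsilon(\tau)$ is non-increasing and $\gamma(\tau)$ non-decreasing; once the detector reports no error the message is kept, so the event $\{\tau_{m}\le\tau\}$ coincides with correct decodability by time $\tau$, of probability $1-\epsilon(\tau)$. Hence the survival function is $\mathrm{P}(\tau_{m}>\tau)=\epsilon(\tau)=Q(\gamma(\tau))$, and the probability of a first correct decision in $[\tau,\tau+d\tau]$ is the increment $-dQ(\gamma(\tau))\ge 0$. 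Taking the expectation of $\tau_{m}$ against this density gives the generalisation of the discrete average in (\ref{eq:0.0.2}),
\begin{equation}
\bar{\tau}=\mathbb{E}[\tau_{m}]=\int_{0}^{T}\tau\,\bigl(-dQ(\gamma(\tau))\bigr),
\end{equation}
which is the claimed expression.

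Finally I would account for the two places where an inequality, rather than an equality, enters. The detector is forced to decide at the deadline $\tau=T$ whenever the check never clears, which places a residual point mass at $T$; and the normal approximation (\ref{eq:II.B.3}) is itself obtained from the achievability bound (\ref{eq:II.B.2bis}), so the $\epsilon(\tau)$ used above idealises the true code performance up to the $O(1)$ and $\frac{1}{2}\log_{2}n$ terms. I would bound these contributions to conclude that the optimal average latency is dominated by the early-detection integral, yielding the stated bound on $\bar{\tau}$. I expect the main obstacle to be exactly this last step: rigorously justifying that the genie model produces the stopping-time law claimed (in particular the monotonicity that makes $\{\tau_{m}\le\tau\}$ an increasing family of events) and controlling the boundary term at $\tau=T$ together with the normal-approximation error, since these are what separate a clean equality from the inequality in (\ref{eq:3.3.5}).
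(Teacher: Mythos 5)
Your proposal is correct and follows essentially the same route as the paper: you identify the law of the stopping time with the differential of the finite-blocklength error probability evaluated at accumulated SNR $P\tau$ under the perfect error-detector model, and then take the expectation of $\tau$ against the density $-dQ(\gamma(\tau))$, which is exactly the content of the paper's Lemma 3.3 and its proof (the paper phrases it via discrete increments $p(\tau+d\tau)=\epsilon^{*}(P\tau)-\epsilon^{*}(P(\tau+d\tau))$ and passes to the limit, whereas you phrase it via the survival function, but these are the same argument). Your explicit attention to the residual point mass at $\tau=T$ and the normal-approximation error as the sources of the inequality is more careful than the paper, which leaves those points implicit.
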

This theorem can be proved by using the following lemma.  
 
\begin{lemma}
For an AWGN channel with an arbitrary ($n$,$M$,$\epsilon$,$\bar{\tau}$) code whose error probabilities satisfy equation (\ref{eq:II.B.3}) and for all $d \tau > 0$, the distribution of $\tau$ for an optimal early-detection scheme is given by the differential of the error:
\begin{equation}
\begin{split}
\lim\limits_{d \tau \rightarrow 0} p(\tau+d \tau) &= -\frac{d Q(\gamma(\tau))}{d\gamma(\tau)}d\gamma(\tau)\\
&= \frac{1}{\sqrt{2\pi}}e^{-\gamma(\tau)^{2}/2}d\gamma(\tau) 
\end{split}
\label{eq:3.3.3}
\end{equation}
\end{lemma}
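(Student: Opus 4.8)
The plan is to identify the cumulative distribution of the stopping time $\tau$ with the complement of the finite-blocklength error probability evaluated at the \emph{accumulated} signal-to-noise ratio, and then to differentiate. First I would recall the operating assumptions of the optimal scheme: a perfect error-detecting code (no false positives) signals, at each candidate instant, whether the message accumulated so far is decodable, and the scheme stops at the first such instant. Because the samples are i.i.d. and the energy is accreted according to $\rho(\tau)=P\tau$ (from the equal-power constraint $\rho_i=P\tau_i$ together with $\rho=PT$), the observation available at time $\tau$ is statistically equivalent to a single use of the code at SNR $P\tau$. Hence the probability that the message has \emph{not} yet been correctly decoded at time $\tau$ is exactly the error probability of the $(n,M,\epsilon)$ code at that SNR, which by equation (\ref{eq:II.B.3}) equals $Q(\gamma(\tau))$ with $\gamma(\tau)=\big(C(P\tau)-R+\tfrac{1}{2n}\log_2 n\big)/\sqrt{V(P\tau)/n}$.

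Next I would define $F(\tau)=\Pr\{\text{correct decision made by time }\tau\}=1-Q(\gamma(\tau))$ and verify that it is a legitimate cumulative distribution function. The key point is that the events ``decodable by time $\tau$'' are nested in $\tau$: accumulating additional signal energy can only raise the effective SNR, and since $C(P\tau)$ is increasing while $\gamma(\tau)$ is increasing in $\tau$, reliability can only improve, so an already-decodable message cannot become undecodable. Consequently $Q(\gamma(\tau))$ is nonincreasing, $F$ is nondecreasing, and every increment $dF(\tau)\ge 0$, so they define a valid probability distribution over the stopping instant.

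Finally I would pass to the differential. The probability that the decision is taken in the infinitesimal window $[\tau,\tau+d\tau]$ is the increment of $F$, namely $p(\tau+d\tau)=F(\tau+d\tau)-F(\tau)=-\,dQ(\gamma(\tau))$. Applying the chain rule gives $dQ(\gamma(\tau))=\frac{dQ(\gamma)}{d\gamma}\,d\gamma(\tau)$, and substituting the standard identity $\frac{dQ(x)}{dx}=-\frac{1}{\sqrt{2\pi}}e^{-x^2/2}$ yields $p(\tau+d\tau)=\frac{1}{\sqrt{2\pi}}e^{-\gamma(\tau)^2/2}\,d\gamma(\tau)$, which is precisely (\ref{eq:3.3.3}).

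The step I expect to be the main obstacle is the first one: rigorously justifying that $F(\tau)=1-Q(\gamma(\tau))$. This hinges on (i) the no-false-positive assumption, so that a stop is declared only when decoding is genuinely correct; (ii) the monotonicity and nesting of the decodability events, needed so that the instantaneous error probability at accumulated SNR $P\tau$ really is the tail of the stopping-time distribution and not merely a pointwise error; and (iii) the validity of the normal approximation (\ref{eq:II.B.3}) across the whole SNR range swept out as $\tau$ runs over $(0,T]$, which is what permits $\epsilon$ to be treated as a smooth, differentiable function of $\tau$. Care is also required at the endpoint $\tau=T$, where a decision is forced by default; this contributes a possible atom to the distribution that is consistent with the inequality form of Theorem~\ref{th:1} but does not affect the density on $(0,T)$ asserted by the lemma.
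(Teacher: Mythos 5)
Your proposal is correct and follows essentially the same route as the paper's own proof: the paper likewise computes the probability of stopping in $[\tau,\tau+d\tau]$ as the telescoping increment $\epsilon^{*}(P\tau,R,n)-\epsilon^{*}(P(\tau+d\tau),R,n)$ under the perfect error-detecting-code assumption, identifies $\epsilon^{*}$ with $Q(\gamma(\tau))$ via equation (\ref{eq:II.B.3}), and lets $d\tau\rightarrow 0$. Your additional remarks on verifying that $1-Q(\gamma(\tau))$ is a valid CDF and on the atom at $\tau=T$ are sensible refinements of the same argument rather than a different approach.
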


\begin{proof}
Consider two possible events $\tau_{1}$ and $\tau_{2}$ where $\tau_{1} < \tau_{2} \leq T$ where a perfect error-detecting code checks whether or not, there are errors in the transmitted message. In this case, the probability of having a correct decision at $\tau_{1}$ is given by $p(\tau_{1}) = 1-\epsilon^{*}(P\tau_{1},R,n)$. If the decoder has not decided at $\tau_{1}$ it means that there is an error in the message and we wait for the next sample at $\tau_{2}$. The immediate consequence is that the probability to make a correct decision at $\tau_{2}$ depends on the probability of having a correct decision previously. Indeed, if the decoder has decided at $\tau_{2}$, then it means that errors is detected and errors have been detected previously. Therefore, the probability of having a correct decision at $\tau_{1}$ is $p(\tau_{2}) = (1-\epsilon^{*}(P\tau_{2},R,n))-(1-\epsilon^{*}(P\tau_{1},R,n)) = \epsilon^{*}(P\tau_{1},R,n)-\epsilon^{*}(P\tau_{2},R,n)$. Fig. \ref{fig:2.1.2.bb} illustrates the decision rule in an optimal early-detection scheme using a perfect error-detecting code.

\begin{figure}[!htbp]
\centering
\includegraphics[width=2.7in,height=2.0in]{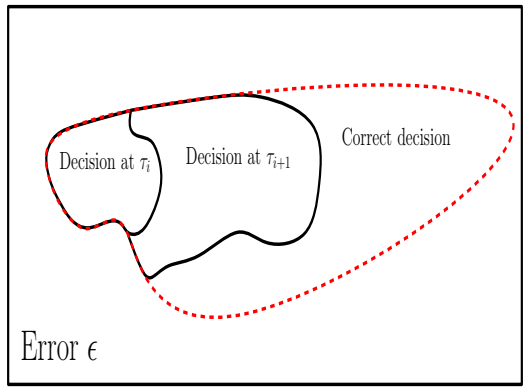}
\caption{Decision in an optimal early-detection scheme using a perfect error-detecting code}
\label{fig:2.1.2.bb}
\end{figure}

Hence, we could thus generalize our purpose by letting $S_{\tau} = \left\lbrace \tau_{1}, \tau_{2},\cdots  T\right\rbrace $ be an increasing and positive sequence of samples that are used to make a decision on the message. In addition, we consider that there exists a $\tau + d \tau \in S_{\tau}$ such that the probability of having a correct at $\tau + d\tau$ is given by:
\begin{equation}
\begin{split}
p(\tau+d\tau) &= (1-\epsilon^{*}(P\tau,R,n))-(1-\epsilon^{*}(P(\tau+d\tau),R,n)\\
&= \epsilon^{*}(P\tau,R,n)-\epsilon^{*}(P(\tau+d\tau),R,n)
\end{split}
\label{eq:3.3.3.b}
\end{equation}
%
Furthermore, let the capacity $C(P\tau)$ and the channel dispersion $V(P\tau)$ be two time-dependent functions where $\rho$ in equations (\ref{eq:Io.B.1bis}) and (\ref{eq:II.B.1bis}) are replaced by $P\tau< P(\tau+d\tau) \leq PT$. By using equation (\ref{eq:II.B.3}), we define $\gamma(\tau)$ as: 
\begin{equation}
\gamma(\tau) = \sqrt{n}\frac{C(P\tau)-R+ \frac{1}{2n}\log_{2} n}{\sqrt{V(P\tau)}}
\label{eq:3.3.3.c}
\end{equation}
Therefore, since for any ($n$,$M$,$\epsilon$,$\bar{\tau}$)-code whose satisfies equation (\ref{eq:II.B.3}), the distribution of $\tau$ can be simplified as in equation (\ref{eq:3.3.3}) by using equations (\ref{eq:3.3.3.c}) and (\ref{eq:II.B.3}) in (\ref{eq:3.3.3.b}) by letting $d\tau \rightarrow 0$. The average latency of an ($n$,$M$,$\epsilon$,$\bar{\tau}$) code is given by the expectation of $\tau$ as in equation (\ref{eq:3.3.5}). This concludes the proof. 
\end{proof}

We provide an analysis of such codes whose latency can be reduced using a sequential test for a given error rate $\epsilon$. According to the obtained results in Fig. \ref{fig:0.1.2}, the normalized average latency of these codes is computed using Theorem \ref{th:1} when an error rate is not larger than $\epsilon = 10^{-9}$ as presented in the Table \ref{table:ch:1.4.3.1}. It can be shown that a good sequential test allows for reducing the latency. Such a scheme is particularly efficient for short blocklengths. Indeed, for a blocklength of $500$, messages can be sent with an error rate of $10^{-9}$ using $54\%$ and $62\%$ of the time symbol, respectively for a rate of $0.5$ and $0.95$ bit per channel use. 
\begin{table}[!htbp]
\centering
\renewcommand{\arraystretch}{1.3}
\caption{Normalized average latency using an optimal sequential detection with an error rate required $\epsilon = 10^{-9}$}
\begin{tabular}{|c|c|c|c|c|c|c|}
\hline
Blocklength $n$  &$150$& $300$ & $500$& $1000$&$2000$&$5000$ \\
\hline
$\frac{1}{T}\mathbb{E}\left[\tau \right] $ for $R = 0.5$ &$0.34$ & $0.46$ & $0.54$ &$0.64$ &$0.73$ &$0.82$\\
\hline
$\frac{1}{T}\mathbb{E}\left[\tau \right] $ for $R = 0.95$ &$0.54$&$ 0.41$&$0.62$&$0.71$ & $0.78$& $0.86$\\
\hline
\end{tabular}
\label{table:ch:1.4.3.1}
\end{table}

As the blocklength grows, the required SNR to reach the required performance decreases and since the error rate is close to one when $R > C$, the average latency increases. This achievable latency can be analysed for various size of the code under various channel conditions. Results are presented in Figs. \ref{fig:0.1.3}-\ref{fig:0.1.5}. For a fixed blocklength, Fig. \ref{fig:0.1.3} shows that as the rate increases, in other words the information block size grows, the time symbol required to receive messages becomes larger. Nevertheless, it is interesting to note that for a rate of $R = 0.5$ and a blocklength of $n = 5000$ symbols, only $75\%$ of the time symbol is needed to reach an error rate not larger than $\epsilon = 10^{-9}$, which can in turn reduce latency.       
\begin{figure*}[!t]
\centering
\includegraphics[width=5.0in,height=3.8in]{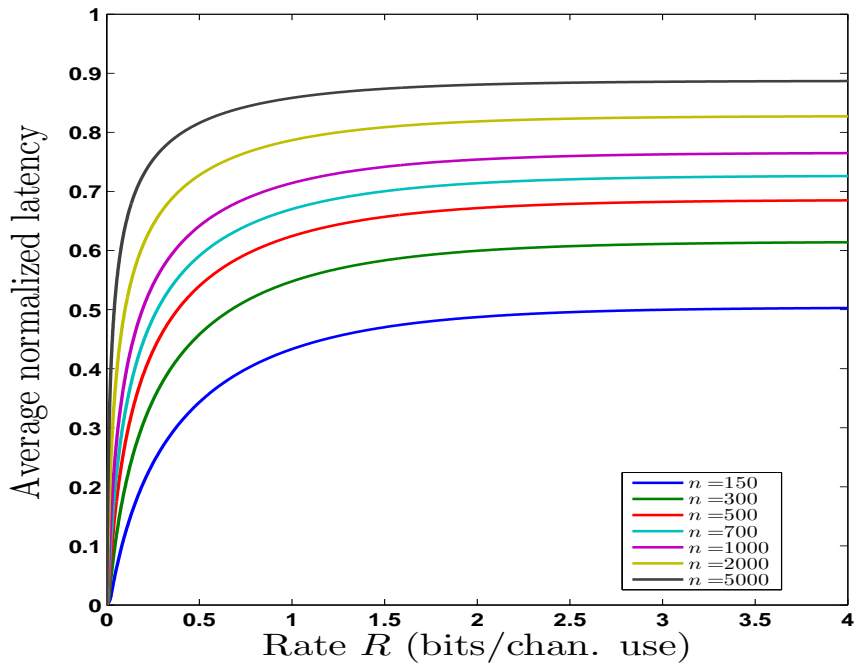}
\caption{Average latency as a function of channel code rate for various blocklength $n$ and probability of block error $\epsilon = 10^{-9}$}
\label{fig:0.1.3}
\end{figure*}

Fig. \ref{fig:0.1.4} provides the average latency for a fixed information block size. On one hand, it can be seen that the average latency decreases slightly when the code rate increases. In other words, for a fixed information block size, a large blocklength increases latency slightly. On the other hand, a large information block size increases latency. Indeed, it can be seen in this figure that for a rate of $R = 0.5$, $88\%$ of the time symbol on average is needed whereas $45\%$ of the time symbol is required for an information block size of $k = 5000$ and $k = 150$ bits respectively. These results show that the minimal latency is not only linked to the blocklength, but mostly to the information block size. 
\begin{figure*}[!t]
\centering
\includegraphics[width=5.0in,height=3.8in]{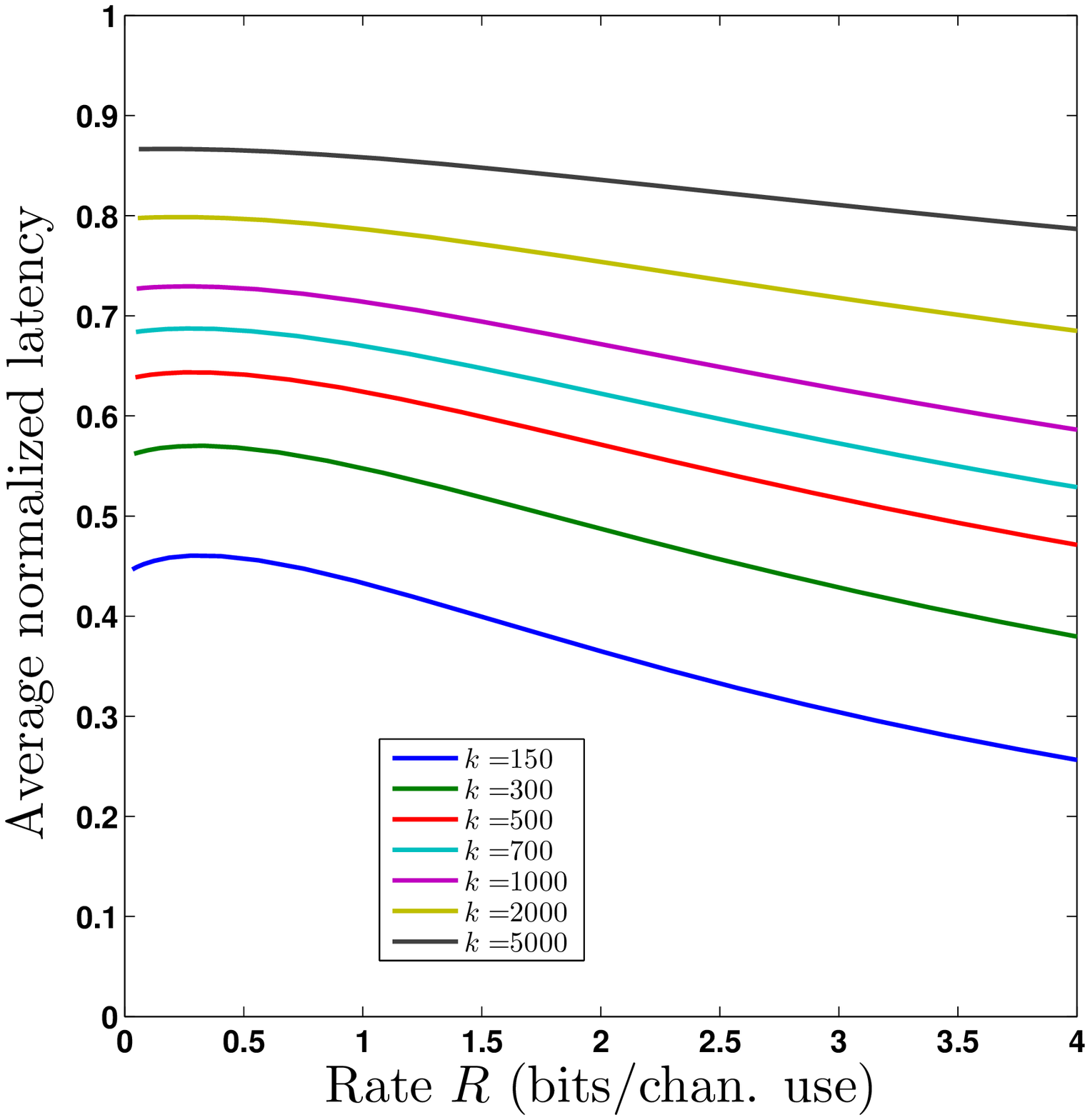}
\caption{hl{Average latency as a function of channel code rate for various information block size $k$ and probability of block error $\epsilon = 10^{-9}$}}
\label{fig:0.1.4}
\end{figure*}

The graph depicted in Fig. \ref{fig:0.1.5} shows the average latency as a function of the information block size $k$ for fixed coding rate of $R = 0.5$ under various channel conditions. It can be shown that, as the error rate is low, the average time symbol needed to decode faster, decreases because the SNR to reach the required error is high. For an information block size of $k = 1000$ bits, $69\%$ of the time symbol on average is needed whereas $84\%$ is required for an error rate of $\epsilon = 10^{-12}$ and $10^{-3}$ respectively. Furthermore, as the information block size grows, the time symbol needed to reach such an error probability increases. Indeed, for an information block size of $k \geq 10^{6}$ bits, the average time needed is close to $100\%$. Thus, there is no advantage to use early-detection schemes over a very large information block size because $100\%$ of the symbol duration is required. 

\begin{figure*}[!t]
\centering
\includegraphics[width=5.0in,height=3.8in]{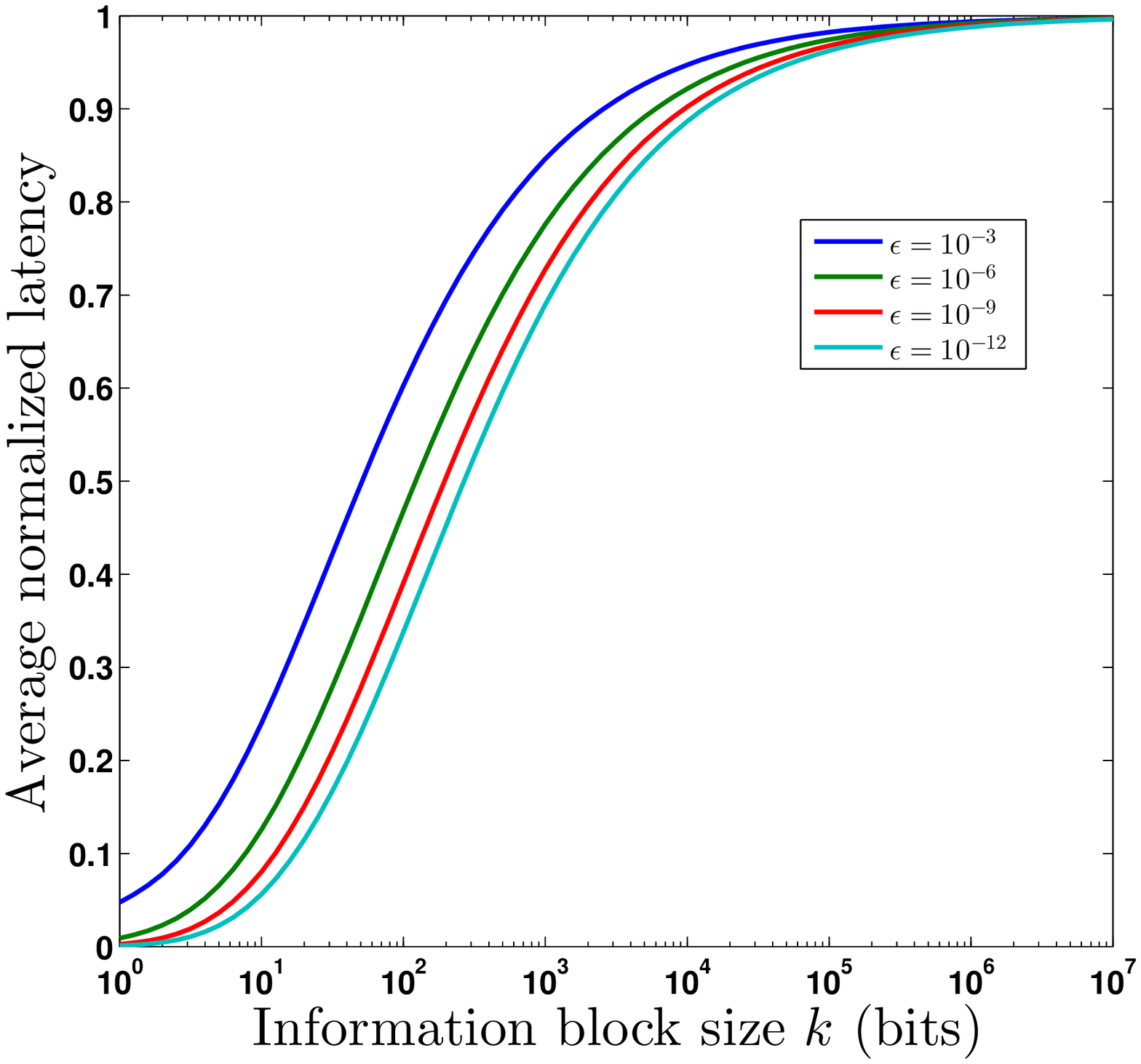}
\caption{hl{Average latency as a function information block size $k$ and rate $R = 0.5$ for various probability of block error $\epsilon$}}
\label{fig:0.1.5}
\end{figure*}

It can be shown that for both synchronous detection and early-detection schemes, the minimal latency is essentially a short message issue. Furthermore, early-detection schemes do not need to wait for the end of the symbol duration to make decisions, which can in turn reduce latency. This can be obtained by using sequential detections. In literature, one of the most popular sequential detection schemes is the multihypothesis sequential probability ratio test (MSPRT) proposed by \cite{Baum1994,Veeralli1995,Dragalin1999,Dragalin2000}. These classes of sequential detection schemes are well established, for which their asymptotic optimality have been investigated in \cite{Baum1994,Dragalin1999,Dragalin2000,Lehmann1959,Chernoff1959}. In particular, Dragalin \textit{et al.} \cite{Dragalin1999} have shown that the proposed MSPRT's are asymptotically optimal relative to the average time needed to make decisions, but also to any positive moment of the stopping time distribution. Unfortunately, the MPSRT schemes might be difficult to implement in practice because codewords consist of several number of symbols; typically more than a hundred symbols. Specifically, for an information block size of $k = 100$ bits, there is $2^{k}$ possible messages, which renders MSPRT not tractable. 

In the next sub-section, we design two sequential tests whose their structures are simple and would facilitate implementation. We provide two ways to achieve low-latency communication. The first employs a list decoder in which the MPSRT can make a decision based on few number of hypothesis. The second uses a sequential test guided by a ``genie'' aided such as error-detecting codes. In the latter, probability ratio tests are not necessarily required.

\subsection{On the Design of Efficient Sequential Detection Schemes}
\subsubsection{An MSPRT Scheme Guided by List-decoding}
Consider $2^{k}$ possible messages of $k$ bits encoded by an arbitrary ($n$,$2^{k}$,$\epsilon$) code whose each symbol has a fixed duration $T$. An encoded message $m$ in which all symbols are simultaneously transmitted through a parallel AWGN channel with $n$ branches. The MSPRT allows for reducing latency by choosing which message was transmitted among $M$ possible messages as soon as the probability of its correct detection is high enough. Inspired by previous works on sequential detection, the early-detection problem can be formulated for multidimensional signalling. By using Bayes's rule, the posterior probabilities can be written as:
\begin{equation}
\text{P}(m|\mathbf{Y}_{t}) = \frac{\pi_{m} \prod\limits_{i=1}^{t}\text{P}(\mathbf{Y}_{t}|m)}{\sum\limits_{j = 1}^{M}\pi_{j} \prod\limits_{i=1}^{t}\text{P}(\mathbf{Y}_{t}|j)} > S_{m}
\end{equation}
where $\pi_{j}$ is the prior probability of the transmitted message, $\text{P}(\mathbf{Y}_{t}|j)$ is the likelihood function for $j = 1,2\cdots, M$, and $0 \leq S_{m} \leq 1$ is a positive parameter. Hence, the stopping time $\tau_{m}$ and the decision $\delta$ is given by:
\begin{subequations}
\begin{align}
\tau_{m} = \inf\left\lbrace t: \text{P}(m|\mathbf{Y}_{t}) > S_{m} \right\rbrace \\
\delta = \hat{m},~  \text{where}~ \hat{m} = \argmax\limits_{1\leq m\leq M}\left(\pi_{m} \prod\limits_{i=1}^{t} \text{P}(\mathbf{Y}_{t}|m) \right)  
\end{align}
\label{eq:3.4.1}
\end{subequations}
Such an equation means that the receiver stops as soon as the posterior probability exceeds a threshold and decide that $m$ was transmitted. In particular, for $M = 2$, the sequential test yields the binary SPRT known as the Wald's SPRT \cite{Wald1948,Siegmund1985,Chernoff1959}. Under such a condition, if $m = 2$, then $\text{P}(m = 2|\mathbf{Y}_{t})$ is given by:  
\begin{equation}
\text{P}( 2|\mathbf{Y}_{t}) = \frac{\pi_{ 2} \prod\limits_{i=1}^{t}\text{P}(\mathbf{Y}_{t}| 2)}{\pi_{ 2} \prod\limits_{i=1}^{t}\text{P}(\mathbf{Y}_{t}| 2)+\pi_{ 1} \prod\limits_{i=1}^{t}\text{P}(\mathbf{Y}_{t}| 1)} > S
\label{eq:1.1}
\end{equation}
It follows that the likelihood ratio of the SPRT is written as:
\begin{equation}
\Lambda(t) = \frac{\prod\limits_{i=1}^{t}\text{P}(\mathbf{Y}_{t}\mid 2)}{\prod\limits_{i=1}^{t}\text{P}(\mathbf{Y}_{t}\mid 1)} > \frac{\pi_{1}}{\pi_{2}}\frac{S}{1-S}
\label{eq:1.2}
\end{equation}
in such a case, the binary SPRT test can be written as follows:
\begin{subequations}
\begin{align}
\tau_{m} = \inf\left\lbrace t: \Lambda(t) \not\in \left[A,B \right]  \right\rbrace \\
\delta = 2,~  \text{if}~ \Lambda(t)>B~\text{and}~\delta = 1,~  \text{if}~ \Lambda(t)<A
\end{align}
\label{eq:3.4.34}
\end{subequations}
where $B = \pi_{1}S/(\pi_{2}-\pi_{2}S)$ and $0 \leq A \leq B$. In literature, it has been proven that the Wald's SPRT is optimal in the sense that it minimizes the expectation of the sample size for which the probabilities of error do not exceed a predefined value \cite{Wald1948,Lehmann1959,Chernoff1959}. 

The performance of the system is given by the average of the message error probabilities:
\begin{equation}
\begin{split}
\epsilon =& \sum\limits_{m = 1}^{M}\pi_{m}\text{P}_{\mathbf{Y}\mid m}(g(\mathbf{Y}_{\tau})\neq m)\\
\end{split}
\label{eq:3.4.35}
\end{equation}
where $\text{P}_{\mathbf{Y}\mid m}(g(\mathbf{Y}_{\tau}) \neq m)$ is the probability of error when the sequential test stopped at $\tau < T$ and chose the wrong message. In \cite{Baum1994,Dragalin1999}, it has been proven that equation (\ref{eq:3.4.35}) has an upper bound for a given threshold $S_{m}$. This is given by the following theorem
\begin{theorem}[Baum and Veeravelli]
Let $\epsilon_{m',m} = \text{P}_{\mathbf{Y}\mid m'}(g(\mathbf{Y}_{\tau})= m)$ be the probability of deciding the message $m$ was transmitted when $m'$ was sent, and $\epsilon_{m} = \text{P}_{\mathbf{Y}\mid m}(g(\mathbf{Y}_{\tau})\neq m)$ the probability of having deciding incorrectly that the message $m$ was transmitted. Then $\epsilon_{m',m}$ and $\epsilon_{m}$ are upper bounded: 
\begin{subequations}
\begin{align}
\epsilon_{m} \leq \sum\limits_{\substack{m'=1 \\ m'\neq m}}^{M} \pi_{m'}\epsilon_{m',m} \leq \pi_{m} \frac{1-S_{m}}{S_{m}}\\
\epsilon = \sum\limits_{m = 1}^{M} \epsilon_{m} \leq \sum\limits_{m = 1}^{M}\pi_{m} \frac{1-S_{m}}{S_{m}}\\
\epsilon  \leq 1-S~\text{if}~ S = S_{1} = S_{2} \cdots = S_{M} 
\end{align}
\label{eq:3.4.36}
\end{subequations}
\label{th:1.3}
\end{theorem}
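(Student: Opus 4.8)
The plan is to prove the three inequalities in order, the engine for all of them being a single pointwise bound on the posterior evaluated at the stopping time. First I would record the defining property of the test: on the event $\{\delta=m\}$ the stopping rule guarantees $\text{P}(m\mid\mathbf{Y}_{\tau})>S_{m}$, which upon clearing the denominator of the posterior becomes $\pi_{m}L_{m}(1-S_{m})>S_{m}\sum_{j\neq m}\pi_{j}L_{j}$, where I abbreviate $L_{j}=\prod_{i=1}^{\tau}\text{P}(\mathbf{Y}_{\tau}\mid j)$. Rearranged, this says that on $\{\delta=m\}$ the likelihood weights obey $\sum_{j\neq m}\pi_{j}L_{j}/L_{m}<\pi_{m}(1-S_{m})/S_{m}$; this is the one inequality that drives everything.

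Next I would pass from this pathwise bound to the error probabilities by a change of measure at the stopping time, writing $\text{P}_{m}$ for $\text{P}_{\mathbf{Y}\mid m}$. Expressing $\epsilon_{m',m}=\text{P}_{m'}(\delta=m)$ as an integral over $\{\delta=m\}$ and changing the reference measure from $m'$ to $m$ through the likelihood ratio $L_{m'}/L_{m}$ gives $\pi_{m'}\epsilon_{m',m}=\int_{\{\delta=m\}}\pi_{m'}(L_{m'}/L_{m})\,d\text{P}_{m}$. Summing over $m'\neq m$ and inserting the pathwise bound from the first step yields $\sum_{m'\neq m}\pi_{m'}\epsilon_{m',m}\le\pi_{m}\tfrac{1-S_{m}}{S_{m}}\,\text{P}_{m}(\delta=m)\le\pi_{m}\tfrac{1-S_{m}}{S_{m}}$, which is the right-hand inequality of (a); the left-hand inequality of (a) is the accounting statement that the per-hypothesis error $\epsilon_{m}$ is dominated by this total weight of trajectories that wrongly decide $m$, i.e. the error contribution attributed to the decided hypothesis.

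For the total error (b) I would simply reorganize $\epsilon=\sum_{m}\pi_{m}\text{P}_{m}(\delta\neq m)$ by the \emph{decided} message rather than the \emph{sent} message: collecting all pairs (sent $m'$, decided $m$) with $m'\neq m$ gives $\epsilon=\sum_{m}\sum_{m'\neq m}\pi_{m'}\epsilon_{m',m}$, and applying the bound from (a) term by term produces $\epsilon\le\sum_{m}\pi_{m}\tfrac{1-S_{m}}{S_{m}}$. For the equal-threshold case (c) I would avoid specializing (b), which only delivers the weaker $(1-S)/S$, and argue directly in the Bayesian picture: since $\delta$ is a function of $\mathbf{Y}_{\tau}$, conditioning on the data at the stopping time gives conditional error probability $1-\text{P}(\delta\mid\mathbf{Y}_{\tau})<1-S$, and taking expectations yields $\epsilon<1-S$.

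The step I expect to be the main obstacle is making the change of measure in the second paragraph rigorous at the \emph{random} time $\tau$: the identity $\pi_{m'}\epsilon_{m',m}=\int_{\{\delta=m\}}\pi_{m'}(L_{m'}/L_{m})\,d\text{P}_{m}$ must be justified by decomposing $\{\delta=m\}=\bigcup_{t}\{\delta=m,\tau=t\}$ and applying the finite-horizon likelihood-ratio identity on each $\mathcal{F}_{t}$-measurable piece (an optional-stopping / Wald-identity argument), after checking that $\tau\le T$ is a genuine stopping time so that no boundary contribution at $T$ is lost. Everything else is algebraic rearrangement of the posterior and a double-counting of the error sum.
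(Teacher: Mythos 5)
The paper does not actually prove this theorem: it is imported verbatim by citation from Baum--Veeravalli and Dragalin \emph{et al.}, so there is no in-paper argument to compare yours against. Your proof is essentially the standard one from that literature, and its core is sound: the pathwise rearrangement of $\text{P}(m\mid\mathbf{Y}_{\tau})>S_{m}$ into $\sum_{j\neq m}\pi_{j}L_{j}/L_{m}<\pi_{m}(1-S_{m})/S_{m}$ on $\{\delta=m\}$, followed by the change of measure $\pi_{m'}\epsilon_{m',m}=\int_{\{\delta=m\}}\pi_{m'}(L_{m'}/L_{m})\,d\text{P}_{m}$ (justified, as you say, by decomposing over $\{\tau=t\}$ on the finite horizon), gives exactly $\sum_{m'\neq m}\pi_{m'}\epsilon_{m',m}\leq\pi_{m}\tfrac{1-S_{m}}{S_{m}}\text{P}_{m}(\delta=m)\leq\pi_{m}\tfrac{1-S_{m}}{S_{m}}$. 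Your direct Bayesian argument for (c) is also the right move, since specializing (b) only yields the weaker $(1-S)/S$. Two caveats. First, the left inequality of (a) is not merely ``accounting'' under the paper's literal definition $\epsilon_{m}=\text{P}_{\mathbf{Y}\mid m}(g(\mathbf{Y}_{\tau})\neq m)$: that quantity is indexed by the \emph{sent} message and is not termwise dominated by $\sum_{m'\neq m}\pi_{m'}\epsilon_{m',m}$, which is indexed by the \emph{decided} message (only the sums over $m$ agree). The statement is coherent only if $\epsilon_{m}$ is read as the probability of wrongly deciding $m$, i.e.\ $\text{P}(\delta=m,\,H\neq m)$, in which case the left inequality of (a) is an equality; you adopt essentially this reading in part (b), but you should say explicitly that you are correcting the definition rather than proving the inequality as literally stated. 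Second, the paper truncates the test at $T$ and forces a decision there even when no posterior has crossed its threshold; on that boundary event the pathwise bound $\text{P}(\delta\mid\mathbf{Y}_{\tau})>S$ fails, so all three bounds hold only for the untruncated test (or must absorb an extra term for the forced decision). You gesture at this when you mention the boundary at $T$, but it deserves to be stated as a hypothesis rather than checked away.
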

It can be noted that the error rate $\epsilon$ is intrinsically linked to the threshold $S_{m}$. Indeed, it can be shown that if $S_{m}$ is small, then the latency is reduced but the error increases whereas for a large value of $S_{m}$, equation (\ref{eq:3.4.1}) would never be satisfied. In this case, the decision on the current message is made at the end of the symbol duration. Therefore, this threshold must be defined such that the latency is minimized for which $\text{P}_{\mathbf{Y}\mid m}(g(\mathbf{Y}_{\tau})\neq m)$ does not exceed a predefined value. 

The average latency of early-detection schemes is then the average of the sample size used for decisions on a message $m$\setcounter{equation}{38}: 
\begin{equation}
\mathbb{E}\left[\tau \right] = \sum\limits_{m = 1}^{M} \pi_{m}\mathbb{E}_{m}\left[\tau \right]
\label{eq:3.4.37}  
\end{equation}
where $\mathbb{E}_{m}\left[\tau \right]$ denotes the expectation with respect to measure $\text{P}_{\mathbf{Y}\mid m}$. We should note that for $M = 2$, Poor and Hadjiliadis \cite{Poor2009} have proven that equation (\ref{eq:3.4.35}) can be bounded by the following proposition. 
\begin{propos}[Poor and Hadjiliadis]
Suppose that the Kullback-Leibler (KL) distances denoted by $D_{1}$ and $D_{2}$ are positive and finite under the hypotheses that $m = 1$ and $m = 2$ were sent respectively. In addition, assume that the thresholds $A$ and $B$ are respectively set such that the error probabilities for such a test are $\text{P}_{\mathbf{Y}\mid m = 1}(g(\mathbf{Y}_{\tau}) = 2) = \alpha$ and $\text{P}_{\mathbf{Y}\mid m = 2}(g(\mathbf{Y}_{\tau}) = 1) = \gamma$ . Then $\mathbb{E}_{1}\left[\tau \right]$ and $\mathbb{E}_{2}\left[\tau \right]$ are given by:
\begin{subequations}
\begin{align}
\mathbb{E}_{1}\left[\tau \right] \geq D_{1}^{-1}\left[\alpha \log\left(B\right)+ (1-\alpha) \log\left(A\right) \right] \\
\mathbb{E}_{2}\left[\tau \right] \geq D_{2}^{-1}\left[(1-\gamma) \log\left(B\right)+ (\gamma) \log\left(A\right)\right]
\end{align}
\label{eq:3.4.111}
\end{subequations}
for any $B\leq (1-\gamma)/\alpha$ and $A\geq \gamma/(1-\alpha)$. 
\label{remrk:1.1}
\end{propos}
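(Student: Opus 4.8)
The plan is to reduce this binary-hypothesis problem to the analysis of the log-likelihood-ratio random walk and then invoke Wald's identity together with an information (data-processing) inequality. First I would set $Z_t = \log\Lambda(t) = \sum_{i=1}^{t} z_i$, where $z_i = \log\frac{\text{P}(\mathbf{Y}_i\mid 2)}{\text{P}(\mathbf{Y}_i\mid 1)}$. Because the samples $\mathbf{Y}_1,\dots,\mathbf{Y}_T$ are i.i.d., the increments $z_i$ are i.i.d., and the stopping rule (\ref{eq:3.4.34}) is exactly the first-exit time of this walk from the interval $[\log A,\log B]$. Under $m=1$ the drift is $\mathbb{E}_1[z_1] = -D_1$ and under $m=2$ it is $\mathbb{E}_2[z_1] = D_2$, where $D_1,D_2$ are the KL distances assumed positive and finite. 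The hypothesis that $D_1,D_2$ are finite and nonzero guarantees a strictly signed drift, which I would use to show $\mathbb{E}_j[\tau]<\infty$ by a standard renewal argument, so that the stopped walk is integrable.

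Next I would apply Wald's identity (the optional-stopping theorem for the centred walk $Z_t - t\,\mathbb{E}_j[z_1]$) to obtain $\mathbb{E}_j[Z_\tau] = \mathbb{E}_j[\tau]\,\mathbb{E}_j[z_1]$, that is $\mathbb{E}_1[\tau] = \mathbb{E}_1[Z_\tau]/(-D_1)$ and $\mathbb{E}_2[\tau] = \mathbb{E}_2[Z_\tau]/D_2$. It then remains to evaluate $\mathbb{E}_j[Z_\tau]$ at the exit. At stopping the walk lies outside $[\log A,\log B]$; on the event ``decide $2$'' it has crossed the upper level $\log B$, and on ``decide $1$'' it has crossed the lower level $\log A$. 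Under $m=1$ these events have probabilities $\alpha$ and $1-\alpha$, while under $m=2$ they have probabilities $1-\gamma$ and $\gamma$. Neglecting the overshoot gives the boundary values $\mathbb{E}_1[Z_\tau] = \alpha\log B + (1-\alpha)\log A$ and $\mathbb{E}_2[Z_\tau] = (1-\gamma)\log B + \gamma\log A$; dividing by the drift yields the two displayed expressions, and the inequality direction is recovered from the Wald threshold relations $B\le(1-\gamma)/\alpha$ and $A\ge\gamma/(1-\alpha)$ that are imposed in the statement.

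The hard part will be handling the overshoot. Wald's identity by itself delivers only an approximate equality, exact in the no-overshoot limit; converting it into a rigorous one-sided bound requires either controlling the expected overshoot beyond $\log A$ and $\log B$, or, more cleanly, invoking the data-processing inequality $\mathbb{E}_1[\tau]\,D_1 \geq d(\alpha\,\|\,1-\gamma)$ (with $d$ the binary relative entropy of the induced decision), and then substituting the threshold relations to re-express the binary relative entropy in terms of $\log A$ and $\log B$. The secondary technical point is verifying the integrability conditions for optional stopping, namely uniform integrability of $Z_{\tau\wedge t}$ and finiteness of $\mathbb{E}_j[\tau]$; both follow from the strictly signed, integrable drift guaranteed by $0<D_1,D_2<\infty$.
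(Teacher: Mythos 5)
First, a point of comparison: the paper does not prove this proposition at all --- it is imported verbatim from the cited reference \cite{Poor2009} --- so your proposal has to stand on its own. Your overall strategy (log-likelihood random walk, Wald's identity $\mathbb{E}_j[Z_\tau]=\mathbb{E}_j[\tau]\,\mathbb{E}_j[z_1]$, boundary decomposition, with the data-processing/Jensen argument as the rigorous backstop) is indeed the standard route to this family of bounds, and you are right both that finiteness of $\mathbb{E}_j[\tau]$ follows from the nonzero drift and that the overshoot is where the difficulty lives.

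There is, however, a genuine gap in your final step, and it is not only the overshoot. To lower-bound $\mathbb{E}_1[\tau]=-\mathbb{E}_1[Z_\tau]/D_1$ you need an \emph{upper} bound on $\mathbb{E}_1[Z_\tau]$. On the event $\{\delta=1\}$ the crossing condition gives $Z_\tau\le\log A$, which is the right direction; but on $\{\delta=2\}$ it gives only $Z_\tau\ge\log B$, so the overshoot at the upper boundary pushes $\mathbb{E}_1[Z_\tau]$ \emph{up} and works against you --- it cannot simply be ``neglected'' to obtain a one-sided inequality. Your proposed repair via data processing does give the clean rigorous bound $D_1\,\mathbb{E}_1[\tau]\ge d(\alpha\,\|\,1-\gamma)=\alpha\log\frac{\alpha}{1-\gamma}+(1-\alpha)\log\frac{1-\alpha}{\gamma}$ (equivalently, by Jensen applied to $\mathbb{E}_1[e^{Z_\tau}\mid\delta]$), but the last substitution does not go through as you claim: the hypothesis $A\ge\gamma/(1-\alpha)$ gives $\log\frac{1-\alpha}{\gamma}\ge-\log A$ as needed, whereas $B\le(1-\gamma)/\alpha$ gives $\log\frac{\alpha}{1-\gamma}\le-\log B$, which is the \emph{reverse} of the inequality required for that term. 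So the threshold relations convert only half of the binary-divergence bound, and the remaining half again requires controlling the overshoot above $\log B$ under $P_1$. Finally, track your signs: with drift $\mathbb{E}_1[z_1]=-D_1<0$, what the argument produces is $\mathbb{E}_1[\tau]\ge -D_1^{-1}\left[\alpha\log B+(1-\alpha)\log A\right]$ (a positive quantity, since the bracket is negative); you should reconcile this minus sign with the statement as printed, whose right-hand side is otherwise nonpositive and the bound vacuous.
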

Moreover, for $M>2$, \cite{Dragalin1999,Baum1994} have investigated an asymptotic formula of any positive moments of $\tau$, thereby establishing the following theorem.
\begin{theorem}[Dragalin \textit{et al.}]
For any minimal KL distance $D_{m}$ between the hypothesis that $m$ was transmitted and the other, for all $i \geq 1$ 
\begin{equation}
\mathbb{E}_{m}\left[\tau^{i} \right] \sim \left(-\log\left(\frac{1-S_{m}}{S_{m}} \right)D_{m}^{-1}  \right)^{i}~\text{as}~\max_{m} S_{m} \rightarrow 1  
   \end{equation}
\end{theorem}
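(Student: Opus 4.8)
The plan is to reduce the multihypothesis stopping rule (\ref{eq:3.4.1}) to a first-passage problem for a random walk with positive drift, and then to invoke renewal theory to extract both the leading-order mean and the concentration required for the higher moments. First I would rewrite the threshold crossing in log-odds form: the event $\text{P}(m\mid\mathbf{Y}_{t}) > S_{m}$ is equivalent to
\[
\log\frac{\text{P}(m\mid\mathbf{Y}_{t})}{1-\text{P}(m\mid\mathbf{Y}_{t})} > \log\frac{S_{m}}{1-S_{m}} = -\log\frac{1-S_{m}}{S_{m}},
\]
so the natural boundary is $b_{m} := -\log\!\big(\tfrac{1-S_{m}}{S_{m}}\big)$, which tends to $+\infty$ as $\max_{m} S_{m}\to 1$. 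Because $1-\text{P}(m\mid\mathbf{Y}_{t})=\sum_{j\neq m}\text{P}(j\mid\mathbf{Y}_{t})$ is dominated by the posterior of the nearest competitor, the crossing of this boundary is asymptotically controlled by the cumulative log-likelihood ratio
\[
Z_{t} = \sum_{i=1}^{t}\log\frac{\text{P}(\mathbf{Y}_{i}\mid m)}{\text{P}(\mathbf{Y}_{i}\mid m')},
\]
where $m'$ is the alternative achieving the minimal Kullback--Leibler distance $D_{m}$.

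Second, I would exploit the random-walk structure of $Z_{t}$. Under the true hypothesis $m$ the increments are i.i.d.\ with mean $\mathbb{E}_{m}\!\big[\log(\text{P}(\mathbf{Y}\mid m)/\text{P}(\mathbf{Y}\mid m'))\big]=D_{m}>0$, so $Z_{t}$ has positive drift $D_{m}$ and, by the strong law of large numbers, $Z_{t}/t\to D_{m}$ almost surely. Consequently the first-passage time $\tau_{m}=\inf\{t: Z_{t}>b_{m}\}$ obeys $\tau_{m}/b_{m}\to D_{m}^{-1}$. Applying Wald's identity to the stopped walk gives $\mathbb{E}_{m}[Z_{\tau_{m}}]=D_{m}\,\mathbb{E}_{m}[\tau_{m}]$, and since the overshoot $Z_{\tau_{m}}-b_{m}$ remains bounded in expectation, I obtain the $i=1$ statement $\mathbb{E}_{m}[\tau_{m}]\sim b_{m}D_{m}^{-1}=-\log\!\big(\tfrac{1-S_{m}}{S_{m}}\big)D_{m}^{-1}$.

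Third, for the moments of order $i\geq 1$ I would establish that $\tau_{m}$ concentrates around its mean sharply enough that $\mathbb{E}_{m}[\tau_{m}^{i}]\sim(\mathbb{E}_{m}[\tau_{m}])^{i}$. By Anscombe's central limit theorem for first-passage times, $(\tau_{m}-b_{m}/D_{m})/\sqrt{b_{m}}$ is asymptotically Gaussian, so the fluctuations are of order $\sqrt{b_{m}}$ while the mean grows like $b_{m}$; hence $\tau_{m}/\mathbb{E}_{m}[\tau_{m}]\to 1$ in probability. Upgrading this to convergence in $L^{i}$ requires only a uniform-integrability bound on $(\tau_{m}/b_{m})^{i}$, which follows from the exponential tail of the first-passage time of a drifted walk. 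Combining the three steps yields $\mathbb{E}_{m}[\tau_{m}^{i}]\sim(b_{m}/D_{m})^{i}$, i.e.\ the asserted asymptotic.

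The hard part will be two-fold. The first difficulty is controlling the overshoot $Z_{\tau_{m}}-b_{m}$: it must be shown to be $O(1)$ uniformly in $b_{m}$, which is exactly the content of nonlinear renewal theory and is the technical core of the Dragalin \emph{et al.}\ analysis; without this the leading constant $D_{m}^{-1}$ could be perturbed. The second difficulty is the multihypothesis coupling --- justifying that the $M-1$ competing posteriors contribute only lower-order terms to $1-\text{P}(m\mid\mathbf{Y}_{t})$, so that the single-alternative log-likelihood ratio with the \emph{minimal} KL distance genuinely dictates the rate. This rests on the exponentially fast decay of the non-maximal posteriors and is precisely what singles out $D_{m}$ as the minimal distance among the alternatives rather than, say, an average over them.
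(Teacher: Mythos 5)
The paper does not prove this theorem at all: it is imported verbatim from the cited literature (``\cite{Dragalin1999,Baum1994} have investigated an asymptotic formula of any positive moments of $\tau$, thereby establishing the following theorem''), so there is no in-paper proof to compare against. Judged on its own, your sketch is a faithful outline of how the result is actually established in those references: the log-odds reformulation with boundary $b_{m}=-\log\bigl((1-S_{m})/S_{m}\bigr)\to\infty$, the reduction to a first-passage problem for the log-likelihood-ratio walk against the \emph{closest} alternative (which is exactly what makes the minimal KL distance $D_{m}$ the governing drift), Wald's identity for the first moment, and a concentration-plus-uniform-integrability argument for $\mathbb{E}_{m}[\tau^{i}]\sim(\mathbb{E}_{m}[\tau])^{i}$. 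You also correctly name the two real obstacles. Two refinements: first, for the leading-order asymptotic $\sim$ the overshoot only needs to be $o(b_{m})$, and the deeper reason nonlinear renewal theory enters is that the thresholded statistic is not an exact random walk but a random walk plus a slowly changing perturbation (the priors and the sum over the non-maximal posteriors), so the classical renewal machinery must be extended to perturbed walks; second, Dragalin \emph{et al.} obtain the $L^{i}$ convergence via $r$-quick convergence of $Z_{t}/t$ rather than Anscombe's CLT, though your route through exponential tails and uniform integrability is a legitimate alternative for i.i.d.\ Gaussian observations. As a proof sketch of the cited result your proposal is sound; as a comparison exercise the only finding is that the paper itself offers nothing to compare it to.
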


It can be seen that, early-detection schemes using sequential probability ratio tests allow for reducing latency on average. However, two major concerns should be raised. The first, with respect to the current-state of the art, such an optimal threshold might be difficult to obtain due to various channel conditions, modulation, and coding schemes. To the best of our knowledge, it is not proven that these theorems hold over a very large number of hypotheses. In \cite{Baum1994,Dragalin2000} the number of hypotheses $M$ is not larger than four. Recent research on MSPRT have not been explored up to now. For these reasons, we will conjecture such a threshold $S_{m}$ with respect to various channel conditions as well as over a large number of possible messages. 

In addition, for a large information block size, the receiver needs $2^{k}$ tests in order to choose which message has the largest posterior probabilities. Thus, such a sequential test might not facilitate implementation when the information block size is very large. Interestingly, a list-decoder can significantly reduce the number of  hypothesis for sequential tests by providing a list of the $\ell < M$ most probable messages. Since $\ell$ is less than $M$, prior probabilities for $\ell$ most probable possible messages should be redefined as: $\bar{\pi}_{m} = \pi_{m}/(\sum_{j = 0}^{\ell} \pi_{j})$, which renders equation (\ref{eq:3.4.1}) accurate. From the Theorem \ref{th:1.3}, it follows that the threshold $S_{m}$ can be written as:
\begin{equation}
S_{m} \leq \frac{1}{1+ \bar{\pi}_{m}^{-1}\sum\limits_{\substack{m'=1 \\ m'\neq m}}^{\ell} \bar{\pi}_{m'}\epsilon_{m',m}}
\label{eq:23.4}
\end{equation}
In particular, for $\pi_{j} = 1/M~\forall~j = 1,2,\cdots M$, prior probabilities of $\ell$ most probable possible messages are defined by $\bar{\pi}_{j} = 1/\ell~\forall~j = 1,2,\cdots \ell$. It is straightforward to show that, as the number of $\ell$ most probable messages is small, $S_{m}$ grows, which could therefore decrease the error probabilities but increase the average latency and vice-versa. 

\subsubsection{A Sequential Detection Guided by Error-detecting Codes} 
A list-decoding guided by a sequential probability ratio test is a possible scheme to achieve low-latency communication. However, channel characteristics must be known by the receiver, because the threshold is mostly determined by channel conditions. Nevertheless, the latter can be estimated by using various signal processing methods, which may not facilitate implementation. 

As an alternative, most of the transmitted messages are encoded by channel codes and error-detecting codes. We could therefore exploit these codes for sequential detection problem to achieve early detection. Hence, sequential probability ratio tests are not required. With the help of a perfect error-detecting code, the receiver can make a decision as soon as such a code does not detect an error on the current signal message. We should remark that over a large number of possible messages, such a scheme can also be used with list-decoding. We believe that a list-decoding with cyclic redundancy check (CRC) is a typical structure that can reduce latency because it increases the minimum distance between messages. 

As a fundamental result, we proved that the optimal latency can be achieved by minimizing the time needed to make decisions. In practice, early-detection schemes can be designed via sequential tests based on list-decoding combined with MSPRT or error-detecting codes as depicted in Fig. \ref{fig:3.1.5}. 
\begin{figure}[!htbp]
\centering
\includegraphics[width=3.5in,height=1.2in]{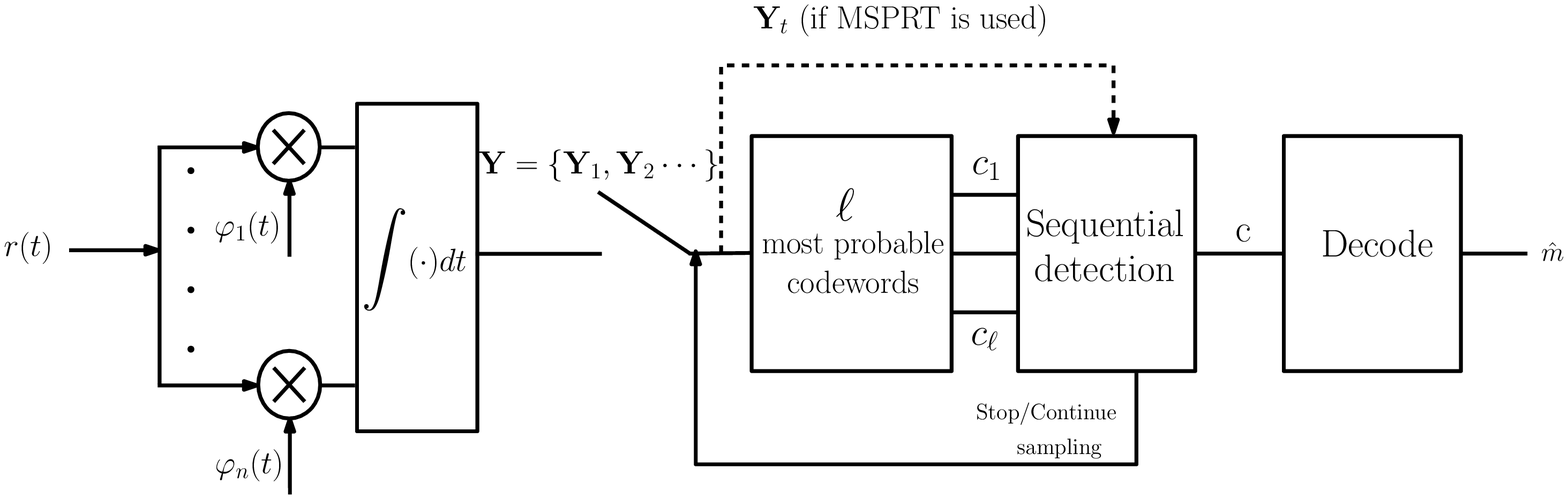}
\caption{A simple early-detection scheme based on list-decoding and sequential detection for low-latency communication}
\label{fig:3.1.5}
\end{figure}

\section{Examples of early-detection schemes for low-latency communication}
\label{sect:III}
\subsection{Low-latency Communication Under AWGN Channel}
Examples of early-detection schemes for low-latency communication are provided in this section. Noteworthy results on both latency and error probabilities are discussed under various channel conditions and message or codeword length. First, we consider a binary block message or codewords of $n$ symbols that are transmitted simultaneously through a parallel AWGN channel with $n$ branches. There are $M = 2^{n}$ possible messages whose prior probabilities are both known and uniformly distributed. Each symbol is transmitted with a fixed duration $T$ and is modulated by a binary modulation such as binary phase-shift keying (BPSK). The signal messages are denoted by $\mathbf{X}^{m} \in \mathbb{R}^{n}$ for all $m = 1,2,\cdots M$ which satisfy an input constraint as stated in equation (\ref{eq:1.1.1}).

The receiver observes a small proportion of the signal message denoted by a sequence $\mathbf{Y}_{1}$, $\mathbf{Y}_{2}, \cdots$ of independent Gaussian variables with mean $\mathbf{X}_{i}^{m}$ where $i = 1,2 \cdots$ whose satisfies equation (\ref{eq:3.bis.3}). The variance of the noise is constant and denoted by $\frac{N_{0}}{2}\mathbf{I}_{n}$. Specifically, each element of this sequence $\mathbf{Y}_{t}$ is a time-series as stated in equations (\ref{eq:3.bis.1}) and (\ref{eq:3.bis.2}). This receiver has a list-decoder which provides $\ell$ most probable messages or codewords. Then, it performs such a sequential test in order to choose the correct message quickly. For early-detection using MPSRT, it is easily verified that under an AWGN channel, such a test the takes form of equation (\ref{eq:3.4.43}).
\newcounter{mytempeqncnt2}
\begin{figure*}[!ht]
\normalsize
\setcounter{mytempeqncnt2}{\value{equation}}
\setcounter{equation}{42}
\begin{subequations}
\begin{align}
\tau_{m} = \inf\left\lbrace t: \sum\limits_{\substack{m'=1 \\ m'\neq m}}^{\ell} \exp\left(  \sum\limits_{i = 1}^{t} \frac{\left( \mathbf{X}^{m'}-\mathbf{X}^{m}\right)^{T}\mathbf{Y}_{i}}{\frac{N_{0}}{2}} \right)  < \frac{1-S_{m}}{S_{m}}  \right\rbrace \\
\delta = \hat{m},~  \text{where}~ \hat{m} = \argmin\limits_{1\leq m\leq M}\Vert\mathbf{Y}_{\tau_{m}} - \mathbf{X}^{m}\Vert 
\end{align}
\label{eq:3.4.43}
\end{subequations}
\setcounter{equation}{\value{mytempeqncnt2}}
\hrulefill
\vspace*{4pt}
\end{figure*}
\setcounter{equation}{43}

Since we have stated that each signal message satisfies an input constraint, the message error probabilities of an early detection should not be much greater than the error probabilities when the receiver makes its decision at the end of the transmitted symbol. Inspired by previous results on MSPRT and Wald's SPRT in\cite{Wald1948,Baum1994,Dragalin1999,Dragalin2000,Poor2009}, a threshold $S_{m}$ might be defined the following corollary.
\begin{coroll}
\begin{equation}
S_{m} \geq \frac{1}{1+\sum\limits_{\substack{m = 1 \\ m\neq m'}}^{\ell}P_{e}(m\rightarrow m')}
\end{equation}
where $P_{e}(m\rightarrow m')$ denotes the pairwise error probability when $\mathbf{X}^{m}$ is sent and $\mathbf{X}^{m'}$ is the only alternative. 
\end{coroll}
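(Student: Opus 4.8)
The plan is to obtain the stated threshold as the value of $S_m$ that makes the Baum--Veeravelli error guarantee of Theorem \ref{th:1.3} coincide with the union bound on the terminal (end-of-symbol) maximum-likelihood error. The starting point is the middle inequality of equation (\ref{eq:3.4.36}), namely $\sum_{m'\neq m}\bar\pi_{m'}\epsilon_{m',m}\leq \bar\pi_m\frac{1-S_m}{S_m}$, specialized to the list-decoding setting of equation (\ref{eq:23.4}) where the renormalized priors over the $\ell$ surviving hypotheses are uniform, $\bar\pi_m=\bar\pi_{m'}=1/\ell$. Canceling the common factor $1/\ell$ on both sides yields the clean bound $\sum_{m'\neq m}\epsilon_{m',m}\leq \frac{1-S_m}{S_m}$; that is, the threshold alone controls the aggregate sequential mis-decision probability for hypothesis $m$.

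First I would impose the design requirement stated informally just before the corollary: the sequential (early) detector should not be meaningfully worse than the detector that waits until $t=T$. Because each signal $\mathbf{X}^{m}$ already satisfies the equal-power constraint (\ref{eq:1.1.1}), the terminal ML error for message $m$ is controlled by the union of pairwise events, $\sum_{m'\neq m}P_{e}(m\rightarrow m')$. Requiring the sequential error guarantee $\frac{1-S_m}{S_m}$ to be no larger than this terminal union bound reduces everything to the single scalar inequality $\frac{1-S_m}{S_m}\leq \sum_{m'\neq m}P_{e}(m\rightarrow m')$.

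The final step is purely algebraic: writing $\frac{1-S_m}{S_m}=\frac{1}{S_m}-1$ and solving for $S_m$ turns this into $\frac{1}{S_m}\leq 1+\sum_{m'\neq m}P_{e}(m\rightarrow m')$, hence $S_m\geq \bigl(1+\sum_{m'\neq m}P_{e}(m\rightarrow m')\bigr)^{-1}$, which is exactly the claimed corollary. This exhibits the lower bound as the smallest threshold that still certifies early detection performance at least as good as terminal detection, and it meshes with the qualitative remark after Theorem \ref{th:1.3} that smaller $S_m$ trades error for latency.

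I expect the genuine difficulty to lie not in the algebra but in the justification of the middle step, where the sequential mis-decision probabilities $\epsilon_{m',m}$ are identified with the terminal pairwise probabilities $P_{e}(m\rightarrow m')$. Theorem \ref{th:1.3} only bounds $\sum_{m'\neq m}\epsilon_{m',m}$ by $\frac{1-S_m}{S_m}$; it does not by itself relate $\epsilon_{m',m}$ to $P_{e}(m\rightarrow m')$. The corollary is therefore best read as a design prescription --- one sets the admissible error target equal to the terminal union bound and then inverts the Baum--Veeravelli guarantee. A fully rigorous version would need to argue that, under the power constraint, accumulating the log-likelihood increments of (\ref{eq:3.4.43}) until the ratio test fires cannot produce a pairwise confusion probability exceeding its terminal counterpart, essentially a monotonicity or optional-stopping argument on the likelihood ratio; that is the step I would have to make precise.
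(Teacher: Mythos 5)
Your derivation is correct and follows essentially the same route the paper intends: the paper states this corollary without any formal proof, offering only the preceding sentence that the early-detection error ``should not be much greater'' than the terminal error, and your step of setting the Baum--Veeravelli guarantee $\frac{1-S_m}{S_m}$ equal to the terminal union bound $\sum_{m'\neq m}P_{e}(m\rightarrow m')$ and inverting (with the uniform renormalized priors $\bar{\pi}_m = 1/\ell$ cancelling) is exactly the algebra the paper itself performs when it derives equation (\ref{eq:23.4}) from Theorem \ref{th:1.3}. One small correction to your closing paragraph: you do not need to relate each sequential mis-decision probability $\epsilon_{m',m}$ to its terminal counterpart $P_{e}(m\rightarrow m')$ individually, because Theorem \ref{th:1.3} already bounds the aggregate sequential error $\epsilon_m$ by $\frac{1-S_m}{S_m}$, so choosing the threshold to make that quantity equal to the terminal union bound is sufficient; the only residual (and real) caveat, which the paper also leaves unaddressed by hedging with ``might be defined,'' is that the guarantee is relative to the union bound rather than to the exact terminal ML error.
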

\begin{remk}
In the case of AWGN channels, and under normal distribution, $P_{e}(m\rightarrow m')$ is simply given by:
\begin{equation}
P_{e}(m\rightarrow m') = Q\left( \frac{\Vert\mathbf{X}^{m}-\mathbf{X}^{m'} \Vert}{\sqrt{2 N_{0}}}\right) 
\end{equation}
where $N_{0}/2$ is the variance of the noise. If we assume that all codewords can be transmitted with equiprobable prior probabilities, then we choose such a threshold $S_{m}$ such that: 
\begin{equation}
S_{m} = \frac{1}{1+\ell\sum\limits_{\substack{m = 1 \\ m\neq m'}}^{\ell}P_{e}(m\rightarrow m')}
\end{equation}
Note that when $\ell = 2$, the sequential detection yields the Wald's SPRT because there is two possible messages. For such a case, the threshold is simply given by: 
\begin{equation}
S_{m} = 1-P_{e}(m\rightarrow m')
\end{equation}
\end{remk}

By using the log-likelihood ratio, it can be shown that the stopping time rule has the following form: receiver stops as soon as:
\begin{equation}
\sum\limits_{i = 1}^{t}\mathbf{Y}_{i} > \frac{N_{0}\log(B)}{2\left(\mathbf{X}^{\ell,1}-\mathbf{X}^{\ell,2}\right)^{T}}
\end{equation}
and decide that the first most probable message was transmitted. Otherwise, choose the alternative if: 
\begin{equation}
\sum\limits_{i = 1}^{t}\mathbf{Y}_{i} < \frac{N_{0}\log(A)}{2\left(\mathbf{X}^{\ell,1}-\mathbf{X}^{\ell,2}\right)^{T}} 
\end{equation}
The thresholds $B$ and $A$ are respectively given by:
\begin{equation}
B = \frac{1-P_{e}(m\rightarrow m')}{P_{e}(m\rightarrow m')}~\text{and}~A = \frac{1}{B}
\end{equation}

\subsection{Early-detection Using MSPRT}
In this example, assume that there are $M = 1024$ possible messages of $10$ bits length modulated by a BPSK modulation, and each symbol is transmitted in parallel. The list-decoder provides a list $\ell$ most probable messages or codewords, by which the MSPRT makes the decision on the message quickly. Under various channel conditions, simulation results are presented in Table \ref{table:ch:3.1.1.1} and Fig. \ref{fig:4.1.0}. 

It can be seen that early-detection schemes using sequential tests allow for reducing the latency significantly. Compared to synchronous detection, the proposed scheme can detect messages approximately $50\%$ faster on average. As expected, by increasing the number of the $\ell$-nearest neighbours, the average latency is reduced but the cost yields an increase in the average error probabilities (see Table \ref{table:ch:3.1.1.1}). 

\begin{table}[!htbp]
\centering
\renewcommand{\arraystretch}{1.3}
\caption{Performance of early-detection schemes using list-decoding and MSPRT: sample size $T = 100$, number of messages $M = 2^{10}$, SNR $= 9.6$ decibel}
\begin{tabular}{|c|c|c|c|c|}
\hline
 & Non seq. & \multicolumn{3}{|c|}{list-decoding+MSPRT } \\
  \cline{3-5} 
& &$\ell = 2$ & $\ell = 3$& $\ell = 5$\\ 
\hline
Error $\epsilon$ &$10^{-4}$ & $10^{-4}$ & $1.2\cdot 10^{-4}$ &$1.4\cdot 10^{-4}$  \\
\hline
$\frac{1}{T}\mathbb{E}\left[\tau \right] $ &$1$&$0.56$&$0.49$&$0.46$\\
\hline
\end{tabular}
\label{table:ch:3.1.1.1}
\end{table}
\begin{figure*}[!htbp]
\centering
\includegraphics[width=5.0in,height=3.8in]{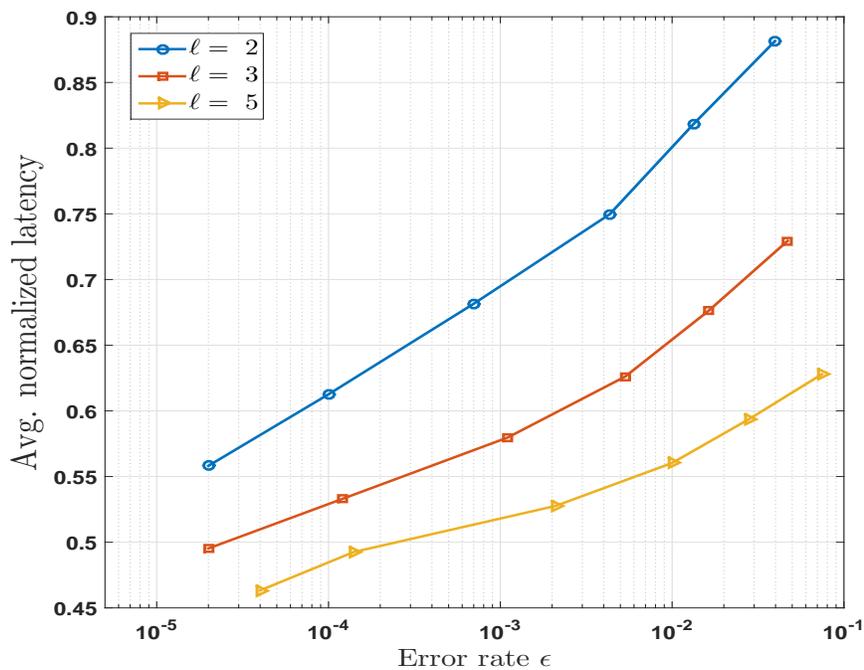}
\caption{Average latency of list-decoding+MSPRT scheme under various channel conditions: $M = 2^{10}$ messages}
\label{fig:4.1.0}
\end{figure*}

The average latency can be evaluated by increasing the number of bits in the message and a tolerated average error rate. Unfortunately, choosing the $\ell$-nearest distance between the actual signal message and $M = 2^{k}$ possible codewords can be computationally prohibitive. It should be noted that there are efficient decoding algorithms which provide the $\ell$ most probable codewords such as a list-decoding algorithm for Reed-Solomon codes \cite{Roth2000,El-Khamy2006}, polar codes \cite{Tal2011,Li2012}. 

\subsection{Early-detection Using Sequential Test Guided by Error Detection Codes}
Error-detecting codes can be used as a sequential test for which the average latency can be reduced. Consider input and output alphabets $\mathcal{A}$ and $\mathcal{B}$ and a conditional probability measure $P_{\mathbf{Y}_{t}\mid \mathbf{X}}: \mathcal{A} \rightarrow \mathcal{B}$. Assuming a perfect error-detecting code such that the decoder  $P_{\mathbf{Z}\mid \mathbf{Y}_{t}}: \mathcal{B} \rightarrow \left\lbrace 0,1\right\rbrace $ where '$0$' indicates no error in data whereas '$1$' indicates error. It can be easily seen that one can reduce latency by the following stopping rule:
\begin{subequations}
\begin{align}
\tau_{m} = \inf\left\lbrace t: P_{\mathbf{Z}\mid \mathbf{Y}_{t}}(\mathcal{B}) = 0  \right\rbrace\\
\delta = m,~  \text{where}~ m \in \mathcal{A}, 
\end{align}
\label{eq:4.1.3bb}
\end{subequations}
In other words, stop the test as soon as the decoder has not detected errors in data, and decide that the word $m$ was transmitted. CRC codes can be used as a ``genie'' guided for such a sequential test. It should be noted that we could easily define the stopping rule by verifying the received and the computed remainder. We have: stop the test as soon as the received and the computed remainder are equal. Unfortunately, if we try to detect too early in noisy channels, then their error detection performance can dramatically decrease, which could degrade the overall performance. Nevertheless, one could introduce empirically a minimum time for which CRC is effective.  

A simulation result is presented in Table \ref{table:3.3}, in which we have only considered short messages. Each symbol is modulated in BPSK with $8$ bits-CRC and $16$ bits-CRC. For a fixed symbol duration, the receiver performs the sequential test that satisfies equation (\ref{eq:4.1.3bb}). We have optimized such a scheme for which the probability of block error is both equal for synchronous detection and early-detection. It can be seen that such a sequential test reduces the average latency while maintaining a probability of block error $\epsilon = 10^{-3}$ in an AWGN channel. Moreover, a large CRC length improves the latency for these short messages. 

\begin{table}[!htbp]
\centering
\renewcommand{\arraystretch}{1.3}
\caption{Normalized average latency using sequential test guided by $n-k$ bits-CRC with a required probability of block error $\epsilon = 10^{-3}$ in AWGN channel}
\begin{tabular}{|c|c|c|c|}
\hline
Information block size $k$  &$150$& $200$ & $500$  \\
\hline
$\frac{1}{T}\mathbb{E}\left[\tau \right] $ for $8$ bits-CRC &$0.43$ & $0.45$ & $0.49$ \\
\hline
$\frac{1}{T}\mathbb{E}\left[\tau \right] $ for $16$ bits-CRC &$0.35$ & $0.356$ & $0.42$ \\
\hline
\end{tabular}
\label{table:3.3}
\end{table}

These results show that early-detection schemes can be used for reducing latency if symbols are transmitted in parallel over the channel. We have demonstrated that sequential tests such as MSPRT and sequential tests guided by ``genie'' aided such as error-detecting codes are effective in practice. 

\subsection{Does Early-detection Work in OFDM?}
\label{sect:IVb}
We have seen that early detection via sequential detection scheme reduces the latency if symbols are transmitted in parallel. OFDM signalling is typical because data is transmitted in parallel by assigning each symbol to one carrier. This is efficient only if the orthogonality of the carriers holds. The sub-carrier spacing must be proportional to the inverse of symbol duration $1/T$. It can be easily shown that the orthogonality does not hold before the end of the transmitted symbol. A critical question that needs to be explored: can we decide earlier by evaluating distances between different OFDM signals? To answer this, consider a codeword $\mathbf{X}^{m} \in \mathbb{C}^{n}$ arbitrarily chosen among $M$ possible codeword, and its OFDM signal $s_{m}(t)$ which uses the inverse of discrete Fourier transform (IDFT) such that:
\begin{equation}
  s_{m}(t) = \sum_{k = 1}^{n} X_{i}^{m}e^{\frac{j2\pi kt}{T}},
\label{eq:3.4.1aaa}
\end{equation}  
where $X^{m}_{k}$ is the information signal which is multiplied by a carrier frequency $k/T$. If it is possible to define early-detection schemes in OFDM, then the receiver can make a decision quickly as soon as the distance between $y(t)$ and $s_{m}(t)$ reaches a threshold $S_{m}$. Hence, the stopping rule can be defined as:
\begin{subequations}
\begin{align}
\tau_{m} = \inf\left\lbrace t: \Vert y(t)- s_{m}(t) \Vert < S_{m} \right\rbrace \\
\delta = \hat{m},~  \text{where}~ \hat{m} = \argmin\limits_{1\leq m\leq M}\Vert y(t)- s_{m}(t)  \Vert,
\end{align}
\label{eq:3.4.1b}
\end{subequations} 
where $y(t)$ is the received signal plus noise. There is several noteworthy feature on distances of these OFDM signals.

\begin{remk}
Assuming a random coding where $\mathbf{X}^{m}$ and $\mathbf{X}^{m'}$ are i.i.d. random vectors, the distance $s_{m}(t)$ and $s_{m'}(t)$ is given by:
\begin{equation}
\begin{split}
\left(d_{t}^{mm'}\right)^{2} &= \Vert\mathbf{X}^{m} \Vert^{2}t+\Vert\mathbf{X}^{m'} \Vert^{2}t-2\Re\left(  \int_{0}^{t} s_{m}(t) s_{m'}^{*}(t)  dt \right)\\
\end{split}
\end{equation}
where $\Re\left(\cdot\right)$ denotes the real part of a complex number. Since $\mathbf{X}^{m}$ and $\mathbf{X}^{m'}$ are independent, then the covariance is null. Hence, the square distance $\left(d_{t}^{mm'}\right)^{2}$ is approximately linear over time. As a result, it is possible to use early-detection efficiently when random coding schemes are employed.      
\end{remk}

Unfortunately, codewords are generally non i.i.d. random vectors. Consider a codeword $\mathbf{X}^{m}$ and its nearest neighbour $\mathbf{X}^{m'}$ such that the distance over time is given by:
\begin{equation}
\begin{split}
\left(d_{t}^{mm'}\right)^{2} &= \int_{0}^{t} \bigg\vert\sum\limits_{k \in \mathcal{K}}\left(X_{k}^{m}-X_{k}^{m'}\right)e^{j2\pi\frac{k}{T}t}\bigg\vert^{2} dt\\
\end{split}
\label{eq:3.4.1c}
\end{equation}
where $\mathcal{K}$ is a subset in which $X_{k}^{m}-X_{k}^{m'} \neq 0, \forall k \in \mathcal{K}$, otherwise it is equal to zero. One can find that the distances between OFDM signals are non-linear functions over time which could render early-detection schemes not efficient. This is due to dimensions that overlap each other $\forall t \in [0,T[$. For example, consider a codebook of $M$ codewords that are mapped by quadrature phase-shift keying  (QPSK) modulation (two bits in each dimension), in which there are at most two different symbols among these $n$ dimensions, \textit{i.e.} the number of elements in $\mathcal{K}$ is equal to one or two. In such a case, we observe these following remarks.

\begin{remk}
When the number of elements of the subset $\mathcal{K}$ denoted by $ \#\mathcal{K}$ is equal to one, we can find from equation (\ref{eq:3.4.1c}) that $\left(d_{t}^{mm'}\right)^{2}$ is linear. However, when $ \#\mathcal{K} = 2$, then distances over time grow linearly but a sinusoid of a frequency $(k_{1}-k_{2})/T$ has to be taken into account due to the overlapping dimension. It can be shown that Fig. \ref{fig:4.1.1} illustrates the purpose. In $\mathcal{K}$, we have modified one single bit in each dimension. When $ \#\mathcal{K}$ increases, there is a superposition of multiple sinusoids of a frequency $(k_{i}-k_{j})/T~\forall k_{i}$ and $\neq k_{j} \in \mathcal{K}$ which tends to linearize distances over time. Hence, it is equivalent to use a random coding scheme.
\end{remk}

\begin{figure*}[!htbp]
\centering
\includegraphics[width=5.0in,height=3.8in]{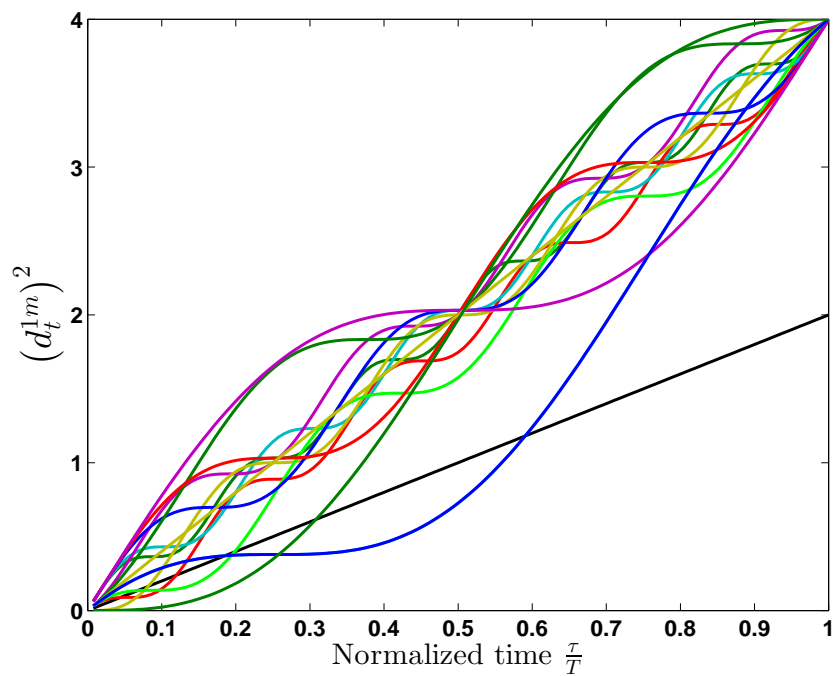}
\caption{Distances over time between OFDM signals. Number of sub-carrier: $128$. Codewords mapped by QPSK modulation. Number of different symbols in codewords $\#\mathcal{K} = 1$ represented by the black curve; number of different symbols in codewords $ \#\mathcal{K} = 2$ represented by colored curves}
\label{fig:4.1.1}
\end{figure*}

\begin{remk}
Latency can be reduced through early-detection schemes if $\left(d_{t}^{mm'}\right)^{2}$ is linear for all $m \neq m'$ in an arbitrary codebook. To do so, it is possible to use pre-coding random rotation matrices to linearize these distances. Assuming a matrix $\mathbf{H}$ whose angles are i.i.d., and considering $A_{l}^{mm'} = X_{l}^{m}-X_{l}^{m'}$, equation (\ref{eq:3.4.1c}) can be rewritten as:
\begin{equation}
\begin{split}
\left(d_{t}^{mm'}\right)^{2} &= \int_{0}^{t} \bigg\vert\sum\limits_{k \in \mathcal{K}}\sum\limits_{l}H_{k,l}A_{l}^{mm'}e^{j2\pi\frac{k}{T}t}\bigg\vert^{2} dt\\
\end{split}
\label{eq:3.4.1e}
\end{equation}
where $H_{k,l}$ is an element of $\mathbf{H}$. Assuming $\mathcal{K} = \left\lbrace k_{1},k_{2} \right\rbrace$, we obtain equation (\ref{eq:3.4.1d}) where $H_{k_{2},l}^{*}$ is the complex conjugate of the element $H_{k_{2},l}$. Since we have stated that $\mathbf{H}$ is a random rotation matrix in which elements are i.i.d., then $\sum_{l} H_{k_{1},l} H_{k_{2},l}^{*}$ must be zero. Therefore, the squared distance $\left(d_{t}^{mm'}\right)^{2}$ must be linear. It should be noted that are orthogonal matrices fulfils such a condition.
\newcounter{mytempeqncnt3}
\begin{figure*}[!ht]
\normalsize
\setcounter{mytempeqncnt3}{\value{equation}}
\setcounter{equation}{55}
\begin{equation}
\begin{split}
\left(d_{t}^{mm'}\right)^{2} &= \int_{0}^{t} \bigg\vert\sum\limits_{l} H_{k_{1},l}A_{l}^{mm'}e^{j2\pi\frac{k_{1}}{T}t}+ \sum\limits_{l}H_{k_{2},l}A_{l}^{mm'}e^{j2\pi\frac{k_{2}}{T}t}\bigg\vert^{2} dt\\
&=  \left(\sum\limits_{l} \bigg\vert H_{k_{1},l}A_{l}^{mm'} \bigg\vert^{2} + \sum\limits_{l}\bigg\vert H_{k_{2},l}A_{l}^{mm'} \bigg\vert^{2}\right)t+2\Re\left(\int_{0}^{t} \sum\limits_{l} H_{k_{1},l} H_{k_{2},l}^{*}\bigg\vert A_{l}^{mm'} \bigg\vert^{2} e^{j2\pi\frac{k_{1}-k_{2}}{T}t}dt \right)
\end{split}
\label{eq:3.4.1d}
\end{equation}
\setcounter{equation}{\value{mytempeqncnt3}}
\hrulefill
\vspace*{4pt}
\end{figure*}
\setcounter{equation}{56}
\end{remk}

A Hadamard matrix is a typical example in which $\left(d_{t}^{mm'}\right)^{2}$ can be linear. By taking results obtained in Fig. \ref{fig:4.1.1}, we apply a complex-valued Hadamard orthogonal matrix $\mathbf{H}$ of $128 \times 128$ in these codewords mapped by QPSK. Fig. \ref{fig:4.1.2} shows that  $\left(d_{t}^{mm'}\right)^{2}$ are approximately linear which could render early-detection schemes efficient by minimizing the time required to make a decision. These results show that there is evidence that latency can be reduced with OFDM signalling or any orthogonal bases by using early-detection schemes. Specifically, we proved that messages can be detected quickly by using random coding schemes and pre-coding orthogonal matrices. 

It should be noted that better orthogonal or non-orthogonal bases whose maximum distances over time could be employed for reducing latency. For example, recently, new waveforms for the $5^{\mbox{th}}$ generation wireless systems have been investigated for low-latency communications such as generalized frequency-division multiplexing (GFDM), filter bank multicarrier (FBMC), etc. \cite{Fettweis2009,Farhang-Boroujeny2011,Gaspar2013,Wunder2014,Michailow2014}. The distances $\left(d_{t}^{mm'}\right)^{2}$ for these schemes should be determined.

\begin{figure*}[!ht]
\centering
\includegraphics[width=5.0in,height=3.8in]{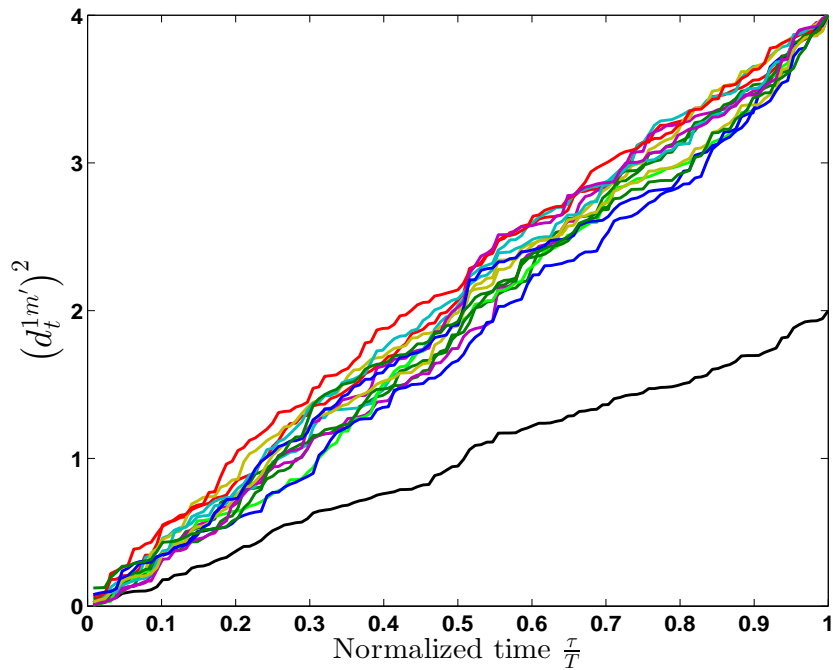}
\caption{Distances over time between OFDM signals where codewords have been pre-coded by a Hadamard orthogonal matrix. Number of different symbols in codewords $\#\mathcal{K} = 1$ represented by the black curve; number of different symbols in codewords $ \#\mathcal{K} = 2$ represented by colored curves}
\label{fig:4.1.2}
\end{figure*}

\section{Optimal latency in multi-hop systems}
\label{sect:IV}
In this work, we will focus on the serial channel relay which is a specific case of the general relay channel. This has been extensively studied for which capacity equations were obtained in  \cite{Cover1979,Kramer2005}. Unfortunately, the latency issue has not been investigated from an information theoretic perspective. In this section, we explore low-latency strategies for relays focusing on AF and DF relaying schemes. The first and the second paragraphs discuss the minimal achievable latency for synchronous-detection schemes. The third emphasizes on parallel channel relay, for which latency can be improved via early-detection schemes.

\subsection{Minimal Latency in AF Relaying Scheme} 
\begin{theorem}
For the AWGN channel with zero mean and unit variance, under equal power, bandwidth and noise density for all hops, the minimal achievable latency in multi-hop systems using AF relaying schemes in the finite-blocklength regime is the solution of equation (\ref{eq:II.B.2biss}) where the overall signal to noise ratio $\rho$ is given by:
\begin{equation}
\rho = G^{h-1}P\frac{(1-G)}{1-G^{h}}
\label{eq:3.5.1}
\end{equation}
where $G = P/(P+1)$ is the transmitted power gain which maintain the output power of a relay to $P$ when the received power is $P+1$ (transmitted power plus noise), and $h$ is the number of hops.
\label{th:5.1}
\end{theorem}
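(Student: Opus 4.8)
The plan is to show that the end-to-end AF cascade collapses to a single AWGN channel whose effective SNR is exactly~(\ref{eq:3.5.1}), and then to invoke the finite-blocklength bound~(\ref{eq:II.B.2biss}) with that SNR. First I would fix the model: the source transmits $x$ with $\mathbb{E}[x^{2}]=P$, and each relay applies the amplitude gain $\sqrt{G}$, $G=P/(P+1)$, to its received waveform. The governing invariant is that every relay receives power $P+1$. Indeed, relay~$1$ receives the signal of power $P$ plus noise $n_{1}\sim\mathcal{N}(0,1)$ of unit power, and the gain yields output power $G(P+1)=P$; since the output power is again $P$, the next relay also receives $P+1$, so by induction the same gain $\sqrt{G}$ maintains output power $P$ at every hop.

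Next I would track how the signal term and each injected noise propagate to the destination. A short induction on the number of amplifications shows that after $k$ relays the signal carries coefficient $G^{k/2}$, while noise $n_{j}$ injected on hop~$j$ carries $G^{(k-j+1)/2}$. With $h$ hops there are $h-1$ amplifications before the destination (which only receives), so the destination observes
\begin{equation*}
G^{(h-1)/2}\,x+\sum_{j=1}^{h-1}G^{(h-j)/2}\,n_{j}+n_{h}.
\end{equation*}
Because the $n_{j}$ are independent Gaussians, the aggregate noise is again Gaussian and the cascade is indistinguishable from a single AWGN channel on which a decision is taken only at the destination.

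Then I would read off the effective SNR. The signal power is $G^{h-1}P$ and the total noise power is the finite geometric sum
\begin{equation*}
\sum_{j=1}^{h-1}G^{h-j}+1=\sum_{i=0}^{h-1}G^{i}=\frac{1-G^{h}}{1-G},
\end{equation*}
so that $\rho=G^{h-1}P\,(1-G)/(1-G^{h})$, which is~(\ref{eq:3.5.1}). Since this equivalent AWGN channel has SNR $\rho$, the minimal blocklength---and hence the minimal latency $L=nT$---for a given message size and error rate $\epsilon$ is obtained by substituting this $\rho$ into the achievability bound~(\ref{eq:II.B.2biss}) and solving for $L$.

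The only delicate step is the noise bookkeeping: each noise term reaches the destination attenuated by a power that depends on the hop at which it entered, and one must check that these exponents assemble into the clean geometric series above. The conceptual crux---the collapse to a single AWGN channel---rests on AF preserving Gaussianity and on there being no intermediate decision; once the effective SNR is established, the latency claim follows immediately from the earlier finite-blocklength bound.
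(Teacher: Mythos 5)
Your proposal is correct and takes essentially the same route as the paper: both reduce the AF cascade to an equivalent AWGN channel by tracking the signal power $G^{h-1}P$ and the accumulated noise $1+G+\cdots+G^{h-1}$, sum the geometric series to get $\rho = G^{h-1}P(1-G)/(1-G^{h})$, and then substitute into the achievability bound (\ref{eq:II.B.2biss}). The only difference is presentational --- you track per-hop amplitude coefficients explicitly while the paper recursively propagates received powers $G(\cdot)+1$ --- but the bookkeeping and conclusion are identical.
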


\begin{proof}
For two-hop using AF relaying schemes in the finite-blocklength regime, the power received at the first relay is $P+1$ (transmitted power plus noise). Therefore, the received power is $G(P+1)+1$. Thus, the overall signal to noise ratio $\rho$ is given by:
\begin{equation}
\rho = \frac{GP}{G+1}
\end{equation}
Under such a condition, one can find the minimal achievable latency from equation (\ref{eq:II.B.2biss}).

We generalize for multi-hop systems using AF relaying schemes in the finite-blocklength regime. Assuming $h$ hops, the received power at the second relay $G(P+1)+1$. the received power at the third relay $G(G(P+1)+1)+1$ and so on. Under equal power, bandwidth and noise density for all hops one can find that the transmitted signal gain is $G^{h-1}P$ and the noise is $1+G+G^{2}+\cdots G^{h-1}$. Hence the SNR $\rho$ is given by:
\begin{equation}
\begin{split}
\rho =& \frac{G^{h-1}P}{1+G+G^{2}+\cdots G^{h-1}}\\
\end{split}
\end{equation}
For $P > 0$, we have $G = P/(P+1)< 1$. Thus, by using the sum of the first $h-1$ terms of a geometric series, we obtain equation (\ref{eq:3.5.1}). In the finite-blocklength regime, one can find the minimal achievable latency using equation (\ref{eq:II.B.2biss}). This concludes the proof. 
\end{proof}

\begin{remk}
For a given information block size, it can be seen that as the number of hops increases, additional channel symbols are needed to maintain the required error probability $\epsilon$. In other words, the number of hops increases latency in multi-hop communication when AF relaying schemes are used.  
\end{remk}

\subsection{Minimal Latency in DF Relaying Scheme}
\begin{theorem} For the AWGN channel, since relays decode the message and forward to the destination, the minimal latency in multi-hop systems using DF relaying is simply equal to the latency of an individual hop times the number of hops:
\begin{equation}
 L_{0} = Lh
 \label{eq:3.5.2bis} 
\end{equation}
where $L$ is the minimal achievable latency given by the solution of equation (\ref{eq:II.B.2biss}) in the finite-blocklength regime.

For relays that can transmit while receiving, one can minimize latency by dividing the message in $q$ parts. Thus, the overall latency is given by:
\begin{equation}
 L_{0} = L\left( 1+\frac{h-1}{q}\right)
\label{eq:3.5.2} 
 \end{equation} 
where $h-1$ is the number of hops. 
\end{theorem}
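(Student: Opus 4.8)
The plan is to derive both formulas from a single pipelining argument, recognizing that the first equation $L_0 = Lh$ is the special case $q=1$ of the second. I would begin from the defining causality of decode-and-forward: relay $k$ cannot re-encode and transmit until it has received and successfully decoded the codeword destined for it. For an undivided message this forces the $h$ hops to chain strictly in series---decoding at each relay begins only after the previous relay has finished forwarding---and since every hop is an AWGN channel with identical power, bandwidth, and noise density, each incurs the same minimal latency $L$ from the finite-blocklength bound (\ref{eq:II.B.2biss}). Concatenation then gives $L_0 = Lh$, with no overlap available because there is nothing to pipeline.

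For the second formula I would permit relays to transmit while receiving and split the message into $q$ independently encoded parts. The first step is to establish that each part costs latency $\ell = L/q$ per hop; this rests on the quasi-linearity of the latency-versus-block-size relation emphasized in the earlier remark, whereby in the regime where the leading $nC$ term dominates the dispersion and $\log n$ corrections in (\ref{eq:II.B.2biss}), the per-hop latency is approximately proportional to the information block size, so that dividing the message into $q$ equal pieces divides the per-hop latency by $q$.

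The second step is the pipeline schedule, which I expect to be exact. Let $T_k(j)$ be the time at which part $j$ clears hop $k$. On the first hop the source sends the parts back-to-back, giving $T_1(j) = j\ell$; for $k \ge 2$, relay $k-1$ may forward part $j$ once it has decoded it (time $T_{k-1}(j)$) and has finished sending part $j-1$ on hop $k$ (time $T_k(j-1)$). Because every hop shares the same per-part latency $\ell$, an easy induction yields $T_k(j) = (j+k-1)\ell$, showing that each part reaches the next relay exactly when that relay falls idle, so the pipeline runs gap-free. Evaluating at the final part on the final hop gives $T_h(q) = (q+h-1)\ell = L\bigl(1+(h-1)/q\bigr)$, which is the claimed latency and correctly reduces to $Lh$ at $q=1$.

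The main obstacle is the linearity step, not the scheduling. The expansion (\ref{eq:II.B.2biss}) is strictly sublinear in blocklength: the $\sqrt{nV}\,Q^{-1}(\epsilon)$ dispersion penalty and the $\frac{1}{2}\log_2 n$ term mean that $q$ codewords of length $n/q$ convey fewer total bits at the same error level than one codeword of length $n$, so $\ell = L/q$ holds only approximately and degrades as $q$ grows and the parts shrink into the small-blocklength regime. A rigorous statement should also track the error budget: to hold the end-to-end block-error probability at $\epsilon$ over $q$ serial parts one designs each part for $\epsilon/q$ by a union bound, perturbing $\ell$ by a lower-order $Q^{-1}(\epsilon/q)$ factor. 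I would therefore present $L_0 = L(1+(h-1)/q)$ as the achievable latency within the linear approximation, tight precisely in the block-size range (roughly $10^2$ to $10^3$ bits) where the earlier analysis found the finite-blocklength bound to be nearly linear.
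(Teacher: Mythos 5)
Your proof takes essentially the same route as the paper's: the paper likewise divides the message into $q$ parts of per-hop latency $L/q$ and pipelines them across relays that transmit while receiving to obtain $(L/q)(q+h-1) = L\left(1+\frac{h-1}{q}\right)$, with $L_0 = Lh$ as the non-pipelined (undivided) case. If anything, your version is more careful than the paper's two-line argument, which simply asserts that each part has latency $L/q$ ``determined by equation (\ref{eq:II.B.2biss})'' without the explicit gap-free scheduling induction, and without acknowledging the sublinearity of the finite-blocklength bound or the per-part error-budget issue that you correctly flag as making $\ell = L/q$ only an approximation.
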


\begin{proof}
For a message with latency $L$, we divide a message into $q$ smaller parts. Thus, the latency of each smaller message is $L/q$ which can be determined by equation (\ref{eq:II.B.2biss}). Considering $h$ number of hops, the overall latency is given by $L/q(q+h-1)$ which is the equation (\ref{eq:3.5.2}). This conclude the proof.
\end{proof}

By dividing the message into smaller parts, let us first show the overall latency with messages divided into equal parts for a two hop scheme using the achievability bound in equation (\ref{eq:II.B.2biss}). In Fig. \ref{fig:5.1.0}, we depict two cases in which the SNR is equal to $-10$ dB and $10$ dB. It can be seen that the latency can be reduced if the length of the message is directly proportional to the length of the codeword. Moreover, dividing by more than four does not reduce the latency significantly. It breaks down at about a thousand bits as shown in section \ref{sect:II}. For shorter message lengths smaller than $40$ to $200$ bits, dividing messages by two or four increase latency in multi-hop communication. For short messages, dividing messages breaks down the utility. Moreover, the absence of possible good short codes makes the division into smaller parts less attractive. Hence, the latency issue in multi-hop communication is essentially a short message issue.

\begin{figure*}[!htbp]
\centering
\includegraphics[width=5.0in,height=3.8in]{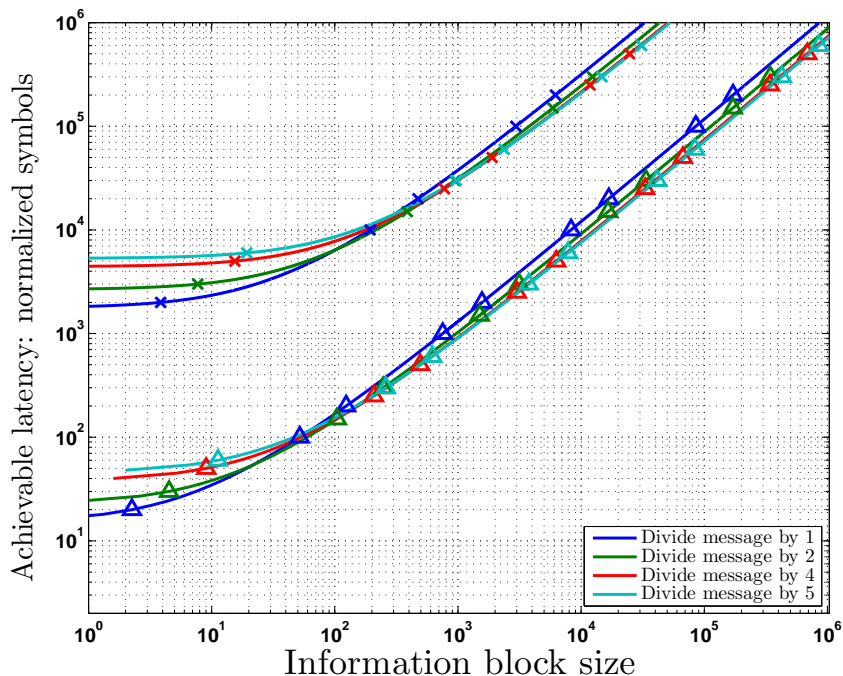}
\caption{Achievable latency in two-hop communication by dividing the message into smaller parts. Probability of block error  $\epsilon = 10^{-7}$. Cross marker SNR $= -10$ dB, triangle marker SNR $= 10$ dB}
\label{fig:5.1.0}
\end{figure*}

This result suggests that the latency issue needs to be refined for short messages. If one can use an AF scheme before the end of the symbol duration and pre-process the signal message at a relay node, the latency issue for short messages in multi-hop communication can be resolved differently than dividing messages into smaller part and use DF relaying scheme.

For long messages greater than $1$ Mbits, one can serialize and transmit over many symbols. OFDM symbols using about $2048$ dimensions is typical. However, a great number of dimensions can takes the same average time as synchronous detection (detection at the end of the symbol duration). Thus, one can minimize latency by getting a maximum capacity and break up the messages to be coded into about $1000$ channel symbols.
 
\subsection{On Minimal Latency in Multi-hop Systems via Early Detection Schemes}
In this paragraph, we explore the minimal achievable latency in two multi-hop relays when early-detection techniques are employed. First, as a general form, we consider that signals can be written as an orthogonal expansion as defined in equation (\ref{eq:1.1.00}). Specifically, a set of vectors $\boldsymbol \varphi = [\varphi_{1}(t),\varphi_{1}(t) \cdots \varphi_{n}(t)]$ forms an orthonormal basis for an $n$-dimensional space. We assume that the transmitted power is $P$ and the AWGN channel has zero mean and unit variance. The relay can sample at any instant of time to perform a sequential test such that its input $\mathbf{Y}_{t}$ at a given instant is a vector, in which each element $Y_{t,i}$ is given by: 
\begin{equation}
Y_{t,i} = \int_{0}^{t} \left( \sum\limits_{i} X_{i}^{m}\varphi_{i}(t) + n(t)\right) \varphi_{i}^{*}(t)  dt 
\end{equation}
where $X_{i}^{m}$ is a resulting symbol of the codeword $m$ in a codebook. In particular, $\Vert\mathbf{X}^{m}\Vert^{2}= nPT$. $n(t)$ denotes the additive Gaussian noise. Next, the basic relay scheme needs to be reconsidered. In its most general form, a relay has an input $\mathbf{Y}_{R}$ and an output such that:
\begin{equation}
\mathbf{X}^{m}_{R} = f_{R}(\mathbf{Y}_{R})
  \end{equation}  
where $f_{R}( \cdot )$ is a stochastic function and $\mathbf{Y}_{R} = \left\lbrace \mathbf{Y}_{t},\mathbf{Y}_{t-1} \cdots \mathbf{Y}_{1} \right\rbrace$ is the information that has been accumulated at a given relay. $\mathbf{Y}_{t}$ has been obtained by minimizing the distance at a given instant time. 

For short messages, one can apply equations (\ref{eq:3.3.5}) and (\ref{eq:3.5.2bis}) to obtain the average latency in DF schemes. Nevertheless, early-detection is not as efficient when the message is divided into smaller parts, because the receiver is required to wait until the last divided message is received before the detection of the whole message.

If the relay can receive while transmitting, then the signal message can be amplified and forwarded. And as soon as the relay can make a decision on the message, then it is possible to pre-process the information signal in order to improve latency and/or performance of multi-hop relaying systems. Indeed, by considering that $\mathbf{Y}_{R}$ has been amplified by a gain $\sqrt{G} = \left( P/(P+1)\right)^{1/2}$, it can be seen that $\Vert \mathbf{X}^{m}_{R} \Vert^{2} \leq nPT$. It is interesting to note that if the relay can make its decision before the end of the transmitted symbol, then it is possible to add an additional component $\mathbf{C}$ such that the distance between codewords can be maximized, which could therefore improve latency in the next hop. As a result, we formulate the low-latency problem for AF relaying scheme in multi-hop as the following optimization problem:
\begin{equation*}
\begin{aligned}
& \underset{m\neq m'}{\text{maximize}}
& & \Vert \mathbf{X}^{m}_{R} -\mathbf{X}^{m'} \Vert\\
& \text{subject to}
& & \mathbf{X}^{m}_{R}  = \sqrt{G}\mathbf{Y}_{R}+\mathbf{C},\\
&&& \Vert \mathbf{X}^{m}_{R}\Vert^{2} = nPT. \\
\end{aligned}
\end{equation*}
 
For binary signalling, such a case is trivial. Consider a noise vector which reduces the distance between the received signal $\sqrt{G}\mathbf{Y}_{R}$ and the codeword $\mathbf{X}^{m'}$ when the actual $\mathbf{X}^{m}$ has been transmitted. If the relay can make its decision before the end of the transmitted symbol, we should add a component $\mathbf{C}$ orthogonal to the noise component such that the optimization problem defined above is satisfied. Fig. \ref{fig:5.1.1} illustrates such a low-latency strategy for AF relaying scheme. The immediate consequence is that it is not necessary to add additional channel symbols which could thus enhance the latency as well as its reliability. In such a condition, this strategy is the optimal approach for multi-hop AF relaying schemes. 

\begin{figure}[!htbp]
\centering
\includegraphics[width=2.5in,height=2.2in]{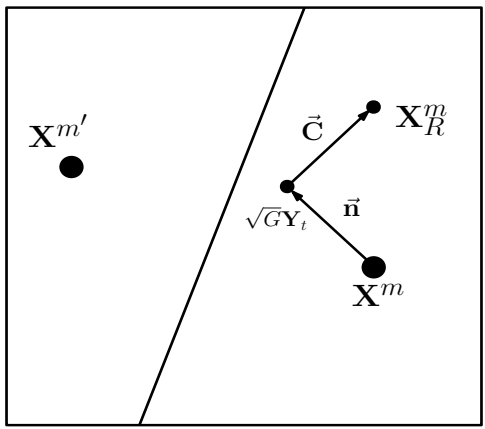}
\caption{Low-latency strategy for AF relaying scheme in multi-hop: the binary signalling case}
\label{fig:5.1.1}
\end{figure}

\section{Conclusion}
\label{sect:V}
This paper studied the optimal latency of communications. We have defined several optimal low-latency strategies for which many applications require extremely low latency and ultra reliable communication. By modifying the achievability bound of every ($n$,$M$,$\epsilon$) channel code, the minimal achievable latency in the finite blocklength regime can be derived for both synchronous detection and early-detection schemes. For synchronous detection (decision at a fixed time-period), the achievability bound gives the minimal number of symbols for a given $\epsilon$ for various fixed coding rate. On the other hand, early-detection schemes attain the optimal achievable latency by minimizing the time required to make decisions. It can be shown that such a strategy is optimal in terms of latency for fixed rate coding with no channel feedback. 

In practice, the minimal latency can be achieved by using sequential tests for which the probabilities of error do not exceed a predefined probability of block error. Such a scheme would be effective only if all symbols are simultaneously transmitted in parallel over the channel; notably, we note that OFDM-like signals transmit data in parallel by assigning each symbol to one carrier. We have developed two sequential tests based on MSPRT and early detection using CRC codes. Results show that in an AWGN channel, receivers can make decisions faster which reduce the latency while maintaining the required error probability.

For OFDM signals, it has been shown that it is possible to define early-detection schemes such that the time required to make decisions is reduced by setting stopping rules: as soon as the minimum distance between signals reaches a threshold, make a decision. However, distances over time are non-linear due to the loss of the orthogonality. In fact, since the minimum sub-carrier spacing is $1/T$, early-detection overlaps dimension of the orthonormal basis, which could produce additional false decision. Fortunately, we proved that there exists specific coding and pre-coding schemes such that these distances over time can be linearized. Specifically, we showed that random coding, random rotation, and orthogonal pre-coding matrices are typical examples. These allow for quickest detection of messages while maintaining spectral efficiency of OFDM signals.  

Furthermore, we have explored low-latency strategies for multi-hop communications. Focusing on multi-hop using AF and DF relaying schemes, we have first derived the minimal achievable latency for both synchronous detections and early-detections. Again, we proved that for short messages, early-detection schemes beats synchronous detections in DF relaying schemes. However, for long messages, such a scheme is not efficient because the receiver is required to wait until the last divided message is received before the detection of the whole message. For AF relaying schemes, early-detection can do better. Indeed, we proved that there exists an optimal scheme using AF with the help of early-detection. If relays can transmit while receiving and if they could make a decision before the end of the symbol duration, one could pre-process the signal immediately after amplify and forward. Therefore, no additional channel symbols are required which could improve the latency of AF relaying schemes.

\section*{Acknowledgment}
The authors would like to thank the following people for fruitful discussions and advices, Alexandre J. Raymond, Marwan Kanaan, Ioannis Psaromiligkos and Sergey Loyka.

\ifCLASSOPTIONcaptionsoff
  \newpage
\fi



%
\bibliographystyle{IEEEtran}
\bibliography{Data_base_latency}

\end{document}